\relax
\documentclass[letterpaper]{article} 
\usepackage{aaai20}  
\usepackage{times}  
\usepackage{helvet} 
\usepackage{courier}  
\usepackage[hyphens]{url}  
\usepackage{graphicx} 
\urlstyle{rm} 
\usepackage{graphicx}  
\frenchspacing  
\setlength{\pdfpagewidth}{8.5in}  
\setlength{\pdfpageheight}{11in}  
\pdfinfo{
	/Title (Multi-Objective Multi-Agent Planning for Jointly Discovering and Tracking Mobile Objects)
	/Author (Hoa Van Nguyen, Hamid Rezatofighi, Ba-Ngu Vo, Damith C. Ranasinghe)
} 

\usepackage[utf8]{inputenc} 
\usepackage[T1]{fontenc}    
\usepackage{url}            
\usepackage{booktabs}       
\usepackage{amsfonts}       
\usepackage{nicefrac}       
\usepackage{microtype}      
\usepackage{graphicx}
\usepackage{verbatim}
\usepackage{amsmath}
\usepackage{amsthm}
\usepackage{amssymb}
\usepackage{algorithm,algorithmicx}
\usepackage{algpseudocode}

\algnewcommand\algorithmicforeach{\textbf{for each}}
\algdef{S}[FOR]{ForEach}[1]{\algorithmicforeach\ #1\ \algorithmicdo}
\usepackage{xcolor}
\newtheorem{theorem}{Theorem}
 
\newtheorem{corollary}[theorem]{Corollary}

\newcommand{\ie}{\textit{i.e.}}
\newcommand{\eg}{\textit{e.g.}}

\DeclareMathOperator*{\argmax}{arg\,max}

\usepackage{multirow}
\usepackage{array}
\newcolumntype{P}[1]{>{\centering\arraybackslash}p{#1}}
\newcolumntype{M}[1]{>{\centering\arraybackslash}m{#1}}
\newcolumntype{L}[1]{>{\raggedright\let\newline\\\arraybackslash\hspace{0pt}}m{#1}}
\newcolumntype{C}[1]{>{\centering\let\newline\\\arraybackslash\hspace{0pt}}m{#1}}
\newcolumntype{R}[1]{>{\raggedleft\let\newline\\\arraybackslash\hspace{0pt}}m{#1}}
\aboverulesep=0ex
\belowrulesep=0ex


\setcounter{secnumdepth}{0} 

%
\setlength\titlebox{2.5in} 
\title{Multi-Objective Multi-Agent Planning for Jointly Discovering and Tracking Mobile Objects}

\author{Hoa Van Nguyen,\textsuperscript{\rm 1} Hamid Rezatofighi,\textsuperscript{\rm 1} Ba-Ngu Vo,\textsuperscript{\rm 2} Damith C. Ranasinghe\textsuperscript{\rm 1} \\
\textsuperscript{\rm 1} The University of Adelaide \\
\textsuperscript{\rm 2} Curtin University\\
\{hoavan.nguyen,hamid.rezatofighi,damith.ranasinghe\}@adelaide.edu.au, ba-ngu.vo@curtin.edu.au
}

\begin{document}
	
	\maketitle
	
	\begin{abstract}
		We consider the challenging problem of online planning for a team of agents to autonomously search and track a time-varying number of mobile objects under the practical constraint of detection range limited onboard sensors.
		A standard POMDP with a value function that either encourages discovery or accurate tracking of mobile objects is inadequate to \textit{simultaneously} meet the \textit{conflicting goals} of searching for undiscovered mobile objects whilst keeping track of discovered objects. The planning problem is further complicated by misdetections or false detections of objects caused by range limited sensors and noise inherent to sensor measurements. We formulate a \textit{novel} multi-objective POMDP based on information theoretic criteria, and an online multi-object tracking filter for the problem. Since controlling multi-agent is a well known combinatorial optimization problem, assigning control actions to agents necessitates a greedy algorithm. We prove that our proposed multi-objective value function is a monotone submodular set function; consequently, the greedy algorithm can achieve a $(1-1/e)$ approximation for maximizing the submodular multi-objective function.
	\end{abstract}
	
	\section{Introduction}
	We study the problem of controlling a team of agents to jointly track discovered mobile objects and explore the environment to search for undiscovered mobile objects of interest. Such problems are ubiquitous in wildlife tracking~\cite{cliff2015online,kays2011tracking,hoa2019jofr,thomas2012wildlife}, search and rescue missions~\cite{gerasenko2001beacon,Murphy2008}. For instance, a team of unmanned aerial vehicles (UAVs) can be deployed to monitor activities of endangered radio-tagged wildlife in a survey scene, or to search for victims in a disaster response \cite{beck2018collaborative}. Hence, it is critical to not only search for undiscovered objects but also track the movements of discovered objects of interest. Consequently, the overall team's objectives arise as a natural multi-objective optimization problem, where several pertinent goals (\ie, tracking and discovering) need to be simultaneously achieved. 
    
    Intrinsically, searching for undiscovered objects whilst simultaneously tracking visible objects are \textit{competing} goals because, in practice, agent sensor systems, such as cameras, have limited detection range. A single agent may only observe a small region of space and a decision to leave a visible object to explore hitherto unseen regions will lead to losing track of visible objects. Therefore, an agent observing a small region of the search area needs to collaboratively interact with other agents to \textit{plan} its course of actions to collectively maximize the overall team's objectives of \textit{tracking} and \textit{discovering} multiple objects. 
    
    Multi-agent planning to achieve \textit{multiple competing objectives} remains a challenging problem because of the complex interactions between agents leading to combinatorial optimization problems~\cite{wai2018multi}. In practice, the problem is \textit{further} complicated because: \textit{i)} the agent sensors are not only limited in range but also sensitivity, and measurements are always subjected to environmental noise. Consequently, object detectors suffer from both missing detections of objects and false detections; and \textit{ii)} the number of objects of interest is often unknown, and varies with time since mobile objects can enter and leave the scene anytime~\cite{vo2012multi}.  Most critically, the computation of optimal planning actions must be timely for real-world applications.

	We propose a framework for multiple agents to jointly plan, search and track a time-varying number of objects using a novel multi-objective information-based value function formulation. Our multi-objective value function captures the competing objectives of planning for \textit{tracking} and \textit{discovery}. We adopt the random finite set (RFS) model for the collection of objects of interest to account for the random appearance and disappearance of objects and their dynamics. Our proposed multi-objective value function maximizes information gain over a look-ahead horizon for both discovered and undiscovered objects. Most importantly, our multi-objective value function is proven to be a monotone submodular set function; thus, we can cope with the intractability of the multi-objective optimization problem (MOP) by employing a greedy algorithm. Our ability to use a greedy algorithm facilitates the computation of approximately optimal control actions with linear complexity in the number of agents for realizing an online planning method.

	\textbf{Related Work:~}Multi-agent path planning in partially observable environments is a difficult problem
	for which the Partially Observable Markov Decision Processes (POMDP) approach has recently gained significant interest~\cite{David2010,MacDermed2013,Messias2011}. Although the cooperation problem can be formulated as a decentralized POMDP (Dec-POMDP), its exact solutions are NEXP-hard~\cite{Bernstein2002}. This is especially problematic for multi-agent POMDPs since the action and observation space grows exponentially with the number of agents~\cite{Amato2015}. To cope with this intractability, we adopt the MPOMDP centralized approach~\cite{Messias2011} for controlling multiple agents~\cite{dames2015autonomous,dames2017detecting,Reza2}.

    POMDP has also been employed for sensor selection problems, \eg, \cite{spaan2015decision,satsangi2018exploiting} proposed using the $\rho$POMDP~\cite{Araya2010nips} for a mobile agent to select $K$ in $N$ available sensors to search and track multiple objects. In particular, \cite{spaan2015decision} proposed a method that always assumes the existence of one extra object in the scene to encourage discovery. However, biasing the cardinality estimate generates sub optimal planing decisions at the cost of tracking performance. 
    
    Our study focuses on the problem of controlling a team of agents for the task of tracking and discovering mobile targets. The task requires a suitable tracking framework. Studies on tracking objects have employed approaches such as Multiple Hypotheses Tracking (MHT)~\cite{reid1979algorithm} or Joint Probabilistic Data Association (JPDA)~\cite{blackman1999design}. The complex nature of our problem requires a framework that has the notion of probability of a random collection due to a time-varying and random number of objects where the states of objects are random vectors. The random finite set (RFS) ~\cite{mahler2007statistical} is the only framework that has the notion of probability density of a random set. 
    Hence, we adopt RFS as our tracking framework.

	Information-based path planning under the RFS framework for a single agent has been studied in several works~\cite{beard2015void,hoang2014sensor,ristic2010sensor}. Most studies on multi-agent path planning using an RFS framework, is based on the Generalized Covariance Intersection (GCI) methods with the assumption that agents have a consensus view of all objects~\cite{gostar2016,Reza2} and using only a single look-ahead horizon. \cite{dames2017detecting} proposed to control multiple fixed-wing UAVs to localize mobile taxis with a single objective value function. For localizing and searching objects simultaneously, \cite{dames2015autonomous} and \cite{charrow2015active} considered a similar scenario, but only for stationary objects. Planning using multi-objective optimization (MOP) has not been explored yet, except for single sensor selection~\cite{zhu2019multi} or using the weighted sum method presented in ~\cite{charrow2015active} where the weighting parameters are difficult to define without prior knowledge. In contrast, we focus on optimizing all value functions (\ie, tracking and discovering) simultaneously using MOP. In particular, our proposed tracking and discovering value functions are based on information criteria. 
	The tracking value function maximizes the mutual information between future measurements and discovered object states under a multi-sensor Bernoulli filter; the discovering value function maximizes the mutual information between empty measurements and undiscovered object states under a grid occupancy filter.

	\textbf{Our contributions:}~The main contributions of our work are: \textit{(i)} We formulate a multi-agent planning problem with competing objectives and propose a planning algorithm for searching and tracking multiple mobile objects; \textit{(ii)} We unify tracking and planning algorithms under a Bernoulli-based model; \textit{(iii)} We prove that our proposed multi-objective value function is submodular; hence, the greedy algorithm can be used to rapidly determine the approximately optimal control actions with a bounded performance guarantee at $(1-1/e)$OPT.  
	
	\section{Problem Formulation}
    First, we introduce assumptions to help define our problem and introduce the notations we adopt in our work. Second, we provide a brief overview of the multi-sensor Bernoulli filter which unifies the tracking and discovering formulation. Next, we formulate our MPOMDP multi-agent planning approach for controlling the multi-agent team. 
    
    \subsection{Assumptions and Notations}
    \textbf{Assumptions:}~We consider a team of $S$ agents surveying a large area to detect and track an unknown and time-varying number of mobile objects using detection-based measurements.  We assume that each agent can localize itself (\eg, using an onboard GPS for UAVs) and that all agents can communicate to a central node to enables us to adopt the centralized approach for MPOMDP. Consequently, we assume that all of the measurements are transferred to a central node that analyzes received information and subsequently sends control actions to all of the agents. Here, we employ a discrete control action space to reduce the computational load~\cite{beard2015void,dames2017detecting}.  We further assume that the measurements from an object collected by the agents are conditionally independent given the object's state~\cite{charrow2015active,thrun2005probabilistic}. 
    
    \noindent\textbf{Notations:~}We use the convention that lower-case letters (\eg, $x$) represent single-object states, upper-case letters (\eg, $X$) represent multi-object (finite-set) states, and blackboard bold letters (\eg, $\mathbb{X,Z}$) represent spaces. We denote the inner product $\int f(x) g(x) dx = \langle f,g\rangle$.

	\subsection{Multi-sensor Bernoulli filter (MS-BF)}\label{sec:MSBF}
	In practice, an object can randomly enter and leave the surveillance region, hence the number of objects of interest is unknown and time-varying. Further, it is important to consider the existence of objects of interest to allow the agents to discover new objects when they enter the scene and to prevent agents following false-positives. This can be addressed by the random finite sets (RFSs) approach, first proposed by \cite{mahler2007statistical}. RFSs are finite-set valued random variables.  We assume that each measurement is uniquely identified, \eg, transmit frequencies from radio beacons \cite{cliff2015online,kays2011tracking,hoa2019jofr,hoa2019tsp,thomas2012wildlife} or MAC address \cite{beck2018collaborative,charrow2015active}, which is typical for wildlife tracking or search and rescue missions. Since each object is uniquely identified, we propose using a the multi-sensor Bernoulli filter (MS-BF)~\cite{vo2012multi}, where each object's state is a Bernoulli RFS, and run multiple MS-BF filters parallelly to track multiple objects. A Bernoulli RFS $X$ on $\mathbb{X}$ has at most one element with probability $r$ for being singleton or $1-r$ for being empty. Its probability density $\pi(\cdot) = (r,p(\cdot))$ given by
	
	$$
	\pi(X) = \begin{cases} 
	1 - r, & X = \emptyset, \\
	r \cdot p(x), & X = \{x\}. \\
	\end{cases} 
	$$

	\noindent\textbf{Object tracking with MS-BF:} We model each object's state at time $k$ by $X_k$ as a Bernoulli RFS. 
	The MS-BF propagates the two quantities: the existence probability $r$ and spatial density $p(\cdot)$. If the posterior density is $\pi_{k-1} = (r_{k-1}, p_{k-1}) $, then the predicted density $\pi_{k|k-1} = (r_{k|k-1}, p_{k|k-1})$ is also a Bernoulli RFS, with $r_{k|k-1} = r_{B,k}(1-r_{k-1}) + r_{k-1} \langle p_{S,k},p_{k-1} \rangle$; $p_{k-1}(x_k) = \big[r_{B,k}(1-r_{k-1}) b_{k}(x_k) + r_{k-1} \langle f_{k|k-1}(x_k|\cdot),p_{S,k}(\cdot)p_{k-1}(\cdot) \rangle  \big]/r_{k|k-1}$. Here, $r_{B,k}$ and $p_{S,k}$ are the probabilities of object birth and object survival, $b_k(\cdot)$ is the object birth density. 
	Further, the updated density $\pi_k$ is also a Bernoulli RFS, given by $\pi_k =  (r_k, p_k)  $ with $r_k = \Psi^{(S)}_k \circ \dots \circ \Psi^{(1)}_k (r_{k|k-1})$; $p_k = \Psi^{(S)}_k \circ \dots \circ \Psi^{(1)}_k (p_{k|k-1})$. Here, $\circ$ denotes composition (of operators),
	$\Psi^{(s)}_k$ is an update operator for agent $s$, \ie:
	\begin{align}
	[\Psi^{(s)}_k(r)] =& \langle \eta^{(s)}(Z^{(s)} | \cdot),p(\cdot) \rangle r/ \big[ (1-r) \notag e^{-\lambda^{(s)}}  \\&+ r \langle \eta^{(s)}(Z^{(s)} | \cdot),p(\cdot) \rangle \big],\label{eq_psi_r_track} \\ 
	[\Psi^{(s)}_k (p)](x) =& \eta^{(s)}(Z^{(s)} | x)p(x)/\langle \eta^{(s)}(Z^{(s)} | \cdot),p(\cdot) \rangle\label{eq_psi_p_track}
	\end{align}
	where the superscript $(s)$ denotes the parameters of agent $s$, 
	$\lambda^{(s)}$ is the clutter rate, and $\eta^{(s)}(Z^{(s)} |x) $ denotes the likelihood of measurement set $Z^{(s)}$ from agent $s$ given the object's state $x$. $\eta^{(s)}(Z^{(s)} |x)$ is also a Bernoulli RFS, given by
	$
	\eta^{(s)}(Z^{(s)} |x) = \begin{cases} 
	1 - p_d^{(s)}(x), & Z^{(s)}  = \emptyset, \\
	p_d^{(s)}(x) g^{(s)}(z|x), & Z^{(s)}  = \{z\}. \\
	\end{cases} 
	$
	
	\noindent Here, $p_d^{(s)}(x)$ is the probability that agent $s$ detects object $x$, and $g^{(s)}(z|x)$ is the (conventional) likelihood function of measurement $z$ given object's state $x$. 
	\subsection{Planning}
	At time $k$, the team of $S$ agents needs to plan how they manoeuvre over the time interval $k+1:k+H$ to improve its estimation of the states of multiple objects $X_k$, where $H$ denotes the look-ahead horizon length. Let $\mathbb{A} \subseteq \mathbb{R}^N$ be all possible set of control actions for a given agent. 
	When the control action $a^i_k \in \mathbb{A}$ is applied to an agent $i$, it follows a trajectory comprised of sequence of  the discrete poses $u^i_{k+1:k+H}(a^i_k) = [u^i_{k+1},\dots,u^i_{k+H}]^T$ with corresponding measurements 
	$Z^i_{k+1:k+H}(a^i_k) = [Z^i_{k+1},\dots,Z^i_{k+H}]^T$ (for notational compactness, we omit the dependence on $X_k$ here).
	Let 
	$A_k = [a^1_k,\dots,a^S_k]^T \in \mathbb{A}^S $ be the control actions where $\mathbb{A}^S = \mathbb{A} \times \dots \times \mathbb{A}$ is the control action space for $S$ agents, and the corresponding measurement set is $Z_{k+1:k+H}(A_k) = [Z^1_{k+1:k+H}(a^1_k),\dots,Z^S_{k+1:k+H}(a^S_k)]^T$. 
	
	The objective of path planning is to find the optimal action $A^{*}_k \in \mathbb{A}^S$ that maximizes the value function, \ie, 
	\begin{align} \label{eq_opt_act}
	    A^{*}_k = \argmax_{A_k \in \mathbb{A}^S} V(X_{k+1:k+H},Z_{k+1:k+H}(A_k)).
	\end{align}
	where $V(X_{k+1:k+H},Z_{k+1:k+H}(A_k)) = \mathbf{E}\Big[\sum_{j=1}^H \mathcal{R}(X_{k+j},Z_{k+j}(A_k))\Big]$ is the value function or the expected sum of immediate rewards $\mathcal{R}(\cdot)$ over a finite horizon $H$. 
	Since an analytic solution does not exist for the expected reward, we use the predicted ideal measurement set (PIMS) \cite{mahler2004multitarget}---a computationally low-cost approach. The value function is calculated:
	\begin{align}
	V(X_{k+1:k+H},\hat{Z}_{k+1:k+H}(A_k)) = \sum_{j=1}^H \mathcal{R}(X_{k+j},\hat{Z}_{k+j}(A_k)) \notag
	\end{align}
	where $\hat{Z}_{k+j}(A_k)$ denotes the ideal measurement set of $Z_{k+j}(A_k)$ calculated using  the measurement model and the estimated states of objects without measurement noise. For notational compactness, we write the value function $V(X_{k+1:k+H},\hat{Z}_{k+1:k+H}(A_k))$ as $V(A_k)$. 
	
	\section{Planning for Tracking and Discovering Multiple Objects}
	
	\subsection{Planning for tracking discovered mobile objects}
	In this problem, we consider maximizing an information-based reward function to reduce the overall uncertainty of the discovered mobile objects because more information naturally implies less uncertainty. In particular, we propose using the mutual information $I(X;Z)$ between the object's state $X$ and measurement state $Z$ as the immediate reward function, and the long-term sum of rewards over a finite horizon $H$, or so-called the value function is given by
	\begin{align}\label{eq_R1_Track}
	V_1(A_k) = \sum_{j=1}^H I(X_{k+j};\hat{Z}_{k+j}(A_k))
	\end{align}
	where 
	$I(X;Z)= h(X) - h(X|Z)$, with $h(X)$ is the generalization of differential entropy for a finite set $X \subseteq \mathbb{X}$ with density $f(X)$ defined as
	$ h(X) = -\int_\mathbb{X} f(X) \log f(X) \delta X $; here $\int_\mathbb{X} \cdot \delta X$ is the set integral~\cite{mahler2007statistical}. For Bernoulli RFS, this integration is simplified to $ h(X) = -\big[ f(X=\emptyset) \log f(X=\emptyset) + \int f(X=x) \log f(X=x) dx\big] $. 
	We have the following theorem.  
	\begin{theorem} \label{eq_theorem_MI}
		The mutual information $I(X;Z)$ between the object state $X$ and measurement state $Z$ is a monotone submodular set function of $Z$. 
	\end{theorem}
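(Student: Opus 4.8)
The plan is to treat $Z$ as indexed by the ground set of agents: for a subset $\mathcal{A}\subseteq\{1,\dots,S\}$ let $Z_{\mathcal{A}}=\{Z^{(s)}:s\in\mathcal{A}\}$ denote the joint (Bernoulli-RFS) measurement collected by the agents in $\mathcal{A}$, and define the set function $g(\mathcal{A}):=I(X;Z_{\mathcal{A}})$. Monotonicity and submodularity of $g$ are then the classical facts that mutual information with a fixed variable is nondecreasing in, and exhibits diminishing returns with respect to, a family of conditionally independent observations. As a preliminary I would record the hybrid representation of a Bernoulli RFS: since $X$ has at most one element it is equivalent to a pair $(\varepsilon_X,x)$, with $\varepsilon_X\in\{0,1\}$ the existence flag and $x$ the conditional single-object state, and likewise for each $Z^{(s)}$; under this identification the set-integral entropy $h(\cdot)$ coincides with the ordinary mixed discrete--continuous differential entropy, so the chain rule $I(X;Z_{\mathcal{B}})=I(X;Z_{\mathcal{A}})+I(X;Z_{\mathcal{B}\setminus\mathcal{A}}\mid Z_{\mathcal{A}})$, nonnegativity of conditional mutual information, and ``extra conditioning cannot increase conditional entropy'' all transfer verbatim.

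Monotonicity (and normalization) then follow immediately: for $\mathcal{A}\subseteq\mathcal{B}$ the chain rule gives $g(\mathcal{B})-g(\mathcal{A})=I(X;Z_{\mathcal{B}\setminus\mathcal{A}}\mid Z_{\mathcal{A}})\ge 0$, and $g(\emptyset)=I(X;\emptyset)=0$. For submodularity, fix $\mathcal{A}\subseteq\mathcal{B}$ and an agent $e\notin\mathcal{B}$; the marginal gains are the conditional mutual informations $g(\mathcal{A}\cup\{e\})-g(\mathcal{A})=I(X;Z^{(e)}\mid Z_{\mathcal{A}})$ and $g(\mathcal{B}\cup\{e\})-g(\mathcal{B})=I(X;Z^{(e)}\mid Z_{\mathcal{B}})$. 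Expanding each as a difference of conditional entropies and invoking the paper's assumption that the agents' measurements of an object are conditionally independent given the object state, we get $h(Z^{(e)}\mid X,Z_{\mathcal{A}})=h(Z^{(e)}\mid X)=h(Z^{(e)}\mid X,Z_{\mathcal{B}})$, so
$$\big[g(\mathcal{A}\cup\{e\})-g(\mathcal{A})\big]-\big[g(\mathcal{B}\cup\{e\})-g(\mathcal{B})\big]=h(Z^{(e)}\mid Z_{\mathcal{A}})-h(Z^{(e)}\mid Z_{\mathcal{B}})\ge 0,$$
the last step because $Z_{\mathcal{A}}$ is a sub-collection of $Z_{\mathcal{B}}$, so adding the extra conditioning variables $Z_{\mathcal{B}\setminus\mathcal{A}}$ cannot increase the conditional entropy. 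This is exactly the diminishing-returns inequality, so $g$ is a normalized monotone submodular set function; since $V_1(A_k)$ is a sum of such terms over $j$, it inherits the same properties.

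The one genuinely delicate step is the preliminary one: justifying that the set-integral differential entropy obeys the usual information-theoretic calculus, and in particular that the mixed discrete--continuous quantities $h(Z^{(e)}\mid Z_{\mathcal{A}})$ and $h(Z^{(e)}\mid X)$ appearing after the chain-rule expansion are well defined (finite, free of sign pathologies). I would discharge this by carrying out the expansions directly on the Bernoulli densities $(r,p(\cdot))$ and the measurement likelihoods $\eta^{(s)}(\cdot\mid x)$ given earlier, using the stated simplification $h(X)=-[f(\emptyset)\log f(\emptyset)+\int f(\{x\})\log f(\{x\})\,dx]$, so that every manipulation is an identity between finite sums of ordinary integrals; the conditional-independence assumption then enters exactly once, to cancel the $h(Z^{(e)}\mid X,\cdot)$ terms, and the two inequalities are nothing more than nonnegativity of conditional mutual information and monotonicity of conditional entropy under added conditioning.
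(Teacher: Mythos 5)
Your proposal is correct and takes essentially the same route as the paper's proof: both express the marginal gains as conditional mutual informations/entropy differences, cancel the $h(\{z\}\mid X,\cdot)$ terms via the conditional independence of measurements given the object state $X$, conclude submodularity from ``extra conditioning cannot increase entropy,'' and get monotonicity from nonnegativity of conditional mutual information. Your explicit agent-indexed ground set and the remark that the Bernoulli-RFS set-integral entropy reduces to a mixed discrete--continuous entropy are a tidier packaging of the same argument (and make explicit a point the paper glosses over), not a different proof.
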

	\begin{proof}
		We want to prove that this mutual information $I(X;Z)$ is a monotone submodular set function, \ie, for $Z_1 \subseteq Z_2 \subseteq \mathbb{Z}$, and $z \in \mathbb{Z} \setminus Z_2$ independent of $Z_1$ and $Z_2$:
		\begin{align}
		I(X;Z_2,\{z\} ) - I(X;Z_2) &\leq 	I(X;Z_1,\{z\} ) - I(X;Z_1).\notag
		\end{align}
		Since $Z_1 \subseteq Z_2 \subseteq \mathbb{Z} $, using mutual information inequalities \cite[p.50]{cover2012elements}, we have:
		\begin{align}
		I(Z_2;\{z\}) &\geq  I(Z_1;\{z\}), \notag \\
		\Leftrightarrow h({z}) - h({z}|Z_2) &\geq h({z}) - h({z}|Z_1), \notag\\
		\Leftrightarrow h({z}|Z_1) &\geq h({z}|Z_2), \notag\\
		\Leftrightarrow h(Z_1,\{z\}) - h(Z_1)  &\geq  h(Z_2,\{z\}) - h(Z_2). \label{eq_18}
		\end{align}
		Further, since $I(Z_2;\{z\}|X) =I(Z_1;\{z\}|X)= 0$ is due to $z$ is independent of $Z_1$ and $Z_2$ given $X$, we have:
		\begin{align}
		h(\{z\}|X) &= h(\{z\}|X,Z_2) + I(Z_2;\{z\}|X) = h(\{z\}|X,Z_2) \notag\\ &= h(X,Z_2,\{z\}) - h(X,Z_2)\notag, \\
		h(\{z\}|X) &= h(\{z\}|X,Z_1) +  I(Z_1;\{z\}|X) \notag\\ &= h(X,Z_1,\{z\}) - h(X,Z_1)\notag.
		\end{align}
		Hence,
		\begin{align}
		h(X,Z_2,\{z\}) - h(X,Z_2) = h(X,Z_1,\{z\}) - h(X,Z_1). \label{eq_21}
		\end{align}
		
		Subtracting \eqref{eq_18} from \eqref{eq_21}, we have:
		\begin{align} 
		&[h(X,Z_2,\{z\}) - h(X,Z_2) ] - [h(Z_2,\{z\}) - h(Z_2) ] \notag\\ &\geq [h(X,Z_1,\{z\}) - h(X,Z_1) ] - [h(Z_1,\{z\}) - h(Z_1) ]   \notag
		\end{align}
		Using differential entropy chain rules \cite[p.253]{cover2012elements}, we have that $h(X|Z_2,\{z\}) = h(X,Z_2,\{z\}) - h(Z_2,\{z\}) $ and $h(X|Z_2) = h(X,Z_2) - h(Z_2)$, thus the above equation is equivalent to
		\begin{align}
		&h(X|Z_2,\{z\}) - h(X|Z_2) \geq h(X|Z_1,\{z\}) - h(X|Z_1)\notag \\
		\Leftrightarrow &[h(X) - h(X|Z_2,\{z\})] - [h(X) - h(X|Z_2)] \notag\\  &\leq [h(X)-h(X|Z_1,\{z\}] - [h(X)-h(X|Z_1)], \notag\\
		\Leftrightarrow 	&I(X;Z_2,\{z\} ) - I(X;Z_2) \leq 	I(X;Z_1,\{z\} ) - I(X;Z_1). \notag
		\end{align}
		Thus, $I(X;Z)$ is a submodular set function. 
		Further, using the chain rule we have: 
		\begin{align}
		I(X;Z_2,\{z\}) - I(X;Z_2) =  I(X;Z_2|\{z\}) \geq 0 \notag
		\end{align}
		Therefore, $I(X;Z)$ is a monotone submodular set function. 
	\end{proof}
	
	\noindent\textbf{Remark:} Our mutual information formulation is different to that in \cite{krause2008near} used for sensor selection problems. Krause et al. showed that for $Z \subseteq \mathbb{Z}$, the mutual information $I(Z;\mathbb{Z}\setminus Z)$ is a submodular set function. In other words, the mutual information $I(Z_1;Z_2)$ is submodular with the \textit{{property}} that $Z_1 \cup Z_2 = \mathbb{Z}$ and $|\mathbb{Z}|$ is fixed. In contrast, we measure the mutual information between the random set object state $X$ and the random set measurement state $Z$ and prove $I(X;Z)$ is also a submodular set function of $Z$ without the aforementioned \textit{property}.  
	\begin{corollary} The value function
		$V_1(A_k)$ in \eqref{eq_R1_Track} is a monotone submodular set function.
	\end{corollary}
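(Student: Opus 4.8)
The plan is to obtain the corollary from Theorem~\ref{eq_theorem_MI} using two standard closure properties of monotone submodular set functions: invariance under composition with a mapping that sends distinct ground-set elements to disjoint ``measurement blocks'', and closure under nonnegative linear combination.

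First I would make precise the ground set on which $V_1$ is a set function. Since the greedy algorithm assigns actions to agents one at a time, the natural ground set is the collection of agent--action pairs $\{(i,a):i\in\{1,\dots,S\},\,a\in\mathbb{A}\}$, a feasible $A_k$ selecting exactly one pair per agent. Committing $(i,a)$ to a partial assignment appends the PIMS block $\hat{Z}^i_{k+1:k+H}(a)$ to the running measurement collection $\hat{Z}_{k+1:k+H}(A_k)$. Because the measurements are stacked per agent and per time step, distinct agent--action pairs contribute disjoint sub-collections of $\mathbb{Z}$, and---by the standing conditional-independence assumption on object-originated measurements together with the noise-free PIMS construction---these blocks meet the independence hypothesis needed to invoke Theorem~\ref{eq_theorem_MI}.

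Next, for each fixed horizon index $j$ I would show that $A_k\mapsto I(X_{k+j};\hat{Z}_{k+j}(A_k))$ is monotone submodular on this ground set. By Theorem~\ref{eq_theorem_MI}, $Z\mapsto I(X_{k+j};Z)$ is monotone submodular on subsets of $\mathbb{Z}$; composing with the block-assignment map preserves monotonicity (appending an agent--action only enlarges $\hat{Z}_{k+j}(A_k)$) and submodularity. For the latter, given $A_1\subseteq A_2$ and $(i,a)\notin A_2$, I would write the marginal gain of $(i,a)$ as the gain of appending the block $\hat{Z}^i_{k+j}(a)=\{\hat{z}_1,\dots,\hat{z}_m\}$, telescope it element by element, and apply the single-element inequality of Theorem~\ref{eq_theorem_MI} to each term---each such term for $A_2$ being evaluated at a set containing the corresponding set for $A_1$---so that summing yields $I(X_{k+j};\hat{Z}_{k+j}(A_2\cup\{(i,a)\}))-I(X_{k+j};\hat{Z}_{k+j}(A_2))\le I(X_{k+j};\hat{Z}_{k+j}(A_1\cup\{(i,a)\}))-I(X_{k+j};\hat{Z}_{k+j}(A_1))$. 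Finally, $V_1(A_k)=\sum_{j=1}^{H}I(X_{k+j};\hat{Z}_{k+j}(A_k))$ is a sum of $H$ monotone submodular set functions with nonnegative (unit) weights, and such sums are again monotone submodular; hence $V_1$ is monotone submodular.

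The main obstacle is the composition step: one must verify that appending an \emph{entire block} of measurements, rather than a single element, preserves the submodular inequality. This reduces to (a) the blocks associated with distinct agent--action pairs being pairwise disjoint, which holds by the per-agent/per-time stacking of $\hat{Z}_{k+1:k+H}(A_k)$, and (b) the independence conditions of Theorem~\ref{eq_theorem_MI} being inherited at each telescoping step, which follows from the conditional-independence assumption on measurements. Everything else is routine bookkeeping.
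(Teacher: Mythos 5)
Your proposal is correct and follows essentially the same route as the paper: invoke Theorem~\ref{eq_theorem_MI} for each horizon term $I(X_{k+j};\hat{Z}_{k+j}(A_k))$ and then use closure of monotone submodular functions under nonnegative linear combinations (the paper cites Nemhauser et al.\ for this step). Your additional telescoping argument, which lifts the single-element submodular inequality of Theorem~\ref{eq_theorem_MI} to whole agent--action measurement blocks on the agent--action ground set, is a correct filling-in of a detail the paper's one-line proof leaves implicit rather than a different approach.
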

	\begin{proof}
		Since $I(X_k;\hat{Z}_{k+j}(A_k))$ is a monotone submodular set function and $V_1(A_k)$ is a positive linear combination of it, according to \cite[p.272]{nemhauser1978analysis}, $V_1(A_k)$ is a monotone submodular set function.
	\end{proof}
	
	\noindent\textbf{Mutual Information Calculation based on MS-BF}: Assume that each object $i$ is associated with a Bernoulli distribution $\pi(X_i) =  (r_i, p_i)$.
	Let $p_i(x)$ be approximated by a set of $N_s$ particles, such that $p_i(x) \approx \sum\limits_{m=1}^{N_s} w_i^{(m)} \delta (x^{(m)} -x) $ with $ \sum\limits_{m=1}^{N_s} w_i^{(m)} = 1$ and $\delta(\cdot)$ is the Kronecker delta function; $X = X_1 \cup \dots \cup X_n$ be the state of multiple objects. 
	Since each object is uniquely identified by its label and estimated by an individual Bernoulli filter, we have
	\begin{align}\label{eq_diff_entropy_def}
	h(X) = \sum_{i}^{n} h(X_i) \approx &\sum_{i}^{n} \Big[- (1 - r_i) \log(1-r_i)  \\ 
	&- r_i \sum\limits_{m=1}^{N_s} \big[w_i^{(m)} \log(r_iw_i^{(m)})\big]\Big].\notag
	\end{align}
	According to the definition of the mutual information $I(X;Z) = h(X) - h(X|Z)$, thus the tracking value function $V_1(A_k)$ can be calculated as $V_1(A_k) = \sum_{j=1}^H \big[ h(X_{k+j}) - h(X_{k+j}|\hat{Z}_{k+j}(A_k))\big]$, where $h(X_{k+1})$ is calculated directly in \eqref{eq_diff_entropy_def}. For $h(X_{k+j}|\hat{Z}_{k+j}(A_k))$, it has the same form as in \eqref{eq_diff_entropy_def}; however, $r_{k+j,i}$ and $w_{k+j,i}^{(m)}$ are calculated by propagating $r_{k,i}$ and $w_{k,i}^{(m)}$ from time $k$ to $k+j$ using \eqref{eq_psi_r_track} and \eqref{eq_psi_p_track} respectively with the ideal measurements $\hat{Z}_{k+j}(A_k)$.
	
	\subsection{Planning to search for undiscovered mobile objects}
	\noindent\textbf{Occupancy Grid Filter:} Since an agent is equipped with a sensor with a limited detection range, we propose using an occupancy grid to represent the probability of any undiscovered objects \cite{elfes1989using}. We extend the static grid approach in \cite{charrow2015active,thrun2005probabilistic} by incorporating the birth probability into each occupancy cell to account for the possibilities of mobile objects entering and leaving the survey area, anytime. The survey area is divided into an  occupancy grid $G = \{g^1,\dots,g^{N_g}\} \subset \mathbb{R}^N $, where each cell $g^i \in G$ associated with a Bernoulli random variable $r^i$. Here, $r^i$ is the probability that cell $g^i$ contains at least one undiscovered object. 
	For initialization, we set $r^{i}_{0} = r_B$ such that every cell has the same prior. Each cell $i$ is propagated through MS-BF over time using the predict and update equations. In particular, let $r^i_{k-1}$ be the probability of cell $g^i$ containing at least one undiscovered object, then its predict and update probabilities at time $k$ are \eqref{eq_occu_grid_r_pred} and \eqref{eq_occu_grid_r_update}. Note that since these objects are yet to be discovered, we use empty measurements for all agents (denoted as $Z^{\emptyset}$ to update). 
	\begin{align}
	    r^{i}_{k|k-1} &= r_B(1-r^i_{k-1}) + r^i_{k-1}p_S, \label{eq_occu_grid_r_pred}\\
	    r^i_k &= \Psi^{(S)}_k \circ \dots \circ \Psi^{(1)}_k (r^{i}_{k|k-1}), \label{eq_occu_grid_r_update}
	\end{align}
	where $[\Psi_k^{(s)}(r^i)] = (1-p_d^{(s)}(g^i))r^i/\big[1-r^i + r^i(1-p_d^{(s)}(g^i))\big].$
	
	\vspace{0.2cm}
	\noindent\textbf{Searching for undiscovered objects:} As before, we propose using mutual information as the immediate reward function. We want to maximize the mutual information between the estimated occupancy grid $G$ and the ideal empty future measurement $\hat{Z}^{\emptyset}_{k+1:k+H}(A_k)$, \ie,
	\begin{align}\label{eq_R2_discover}
	V_2(A_k) = \sum_{j=1}^H I(G_{k};\hat{Z}^{\emptyset}_{k+j}(A_k)),
	\end{align}
	where $I(G_{k};\hat{Z}^{\emptyset}_{k+j}(A_k)) = \mathcal{H}(G_{k}) - \mathcal{H}(G_{k}|\hat{Z}^{\emptyset}_{k+j}(A_k))$ and $\mathcal{H}(G_{k})$ is the Shannon entropy of $G_{k}$:
	\begin{align}\label{eq_shannon_entropy_def}
	\mathcal{H}(G_{k}) = -\sum_{i=1}^{N_g} \big[&r_{k}^i \log(r_{k}^i)  \\
	&+(1-r_{k}^i) \log(1-r_{k}^i)\big], \notag
	\end{align}
	and $\mathcal{H}(G_{k+j}|\hat{Z}^{\emptyset}_{k+j}(A_k))$ has the same form as in  \eqref{eq_shannon_entropy_def} with $r^i_{k+j}$ is calculated by propagating $r^i_{k+j}$ from $k$ to $k+j$ using the update step in \eqref{eq_occu_grid_r_update} with empty measurements $\hat{Z}^{\emptyset}_{k+j}(A_k)$.
	\begin{theorem} The value function
		$V_2$ in \eqref{eq_R2_discover} is a monotone submodular set function.
	\end{theorem}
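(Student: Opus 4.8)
The plan is to reduce this statement to Theorem~\ref{eq_theorem_MI} together with the linearity argument already used in the Corollary for $V_1$. The key observation is that the occupancy grid $G_k=\{g^1,\dots,g^{N_g}\}$, being a collection of Bernoulli random variables, can play exactly the role of the object state $X$ in Theorem~\ref{eq_theorem_MI}: the proof of that theorem used only generic identities for (differential/Shannon) entropy --- the chain rule, the ``conditioning cannot increase entropy'' inequality $h(z\mid Z_1)\ge h(z\mid Z_2)$ for $Z_1\subseteq Z_2$, and the vanishing of $I(Z_i;\{z\}\mid X)$ that follows from conditional independence of a newly added measurement given the state. None of these used anything specific to an RFS object state, so they transcribe verbatim with $X$ replaced by $G_k$ and $Z$ replaced by the ideal empty-measurement set $\hat{Z}^{\emptyset}$. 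The conditional-independence requirement is exactly the modelling assumption (stated in the Problem Formulation) that measurements collected by distinct agents are conditionally independent given the underlying state, applied here to the per-cell occupancy readings.

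Concretely, first I would fix a horizon step $j$ and take the ground set to be the collection of per-agent ideal empty measurements $\hat{Z}^{\emptyset}_{k+j}$ induced by $A_k$, so that enlarging $A_k$ enlarges this measurement set. For $Z_1\subseteq Z_2$ and a further agent's empty measurement $z$ that is independent of $Z_1,Z_2$ and conditionally independent of them given $G_k$, the identical chain of manipulations as in the proof of Theorem~\ref{eq_theorem_MI} gives
\[
I(G_k;Z_2\cup\{z\})-I(G_k;Z_2)\ \le\ I(G_k;Z_1\cup\{z\})-I(G_k;Z_1),
\]
while the chain rule yields $I(G_k;Z_2\cup\{z\})-I(G_k;Z_2)=I(G_k;\{z\}\mid Z_2)\ge 0$, i.e.\ monotonicity. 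Hence each summand $I(G_k;\hat{Z}^{\emptyset}_{k+j}(A_k))$ in \eqref{eq_R2_discover} is a monotone submodular set function of the measurement set associated with $A_k$.

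Finally, mirroring the Corollary for $V_1$, $V_2(A_k)=\sum_{j=1}^{H} I(G_k;\hat{Z}^{\emptyset}_{k+j}(A_k))$ is a nonnegative linear combination of monotone submodular set functions, so by \cite[p.272]{nemhauser1978analysis} it is itself monotone submodular. The main obstacle I anticipate is not the algebra --- which is inherited wholesale from Theorem~\ref{eq_theorem_MI} --- but making precise the set-function bookkeeping: that enlarging the action tuple $A_k$ genuinely enlarges $\hat{Z}^{\emptyset}_{k+j}(A_k)$ and that the newly included per-agent empty measurement is (conditionally) independent of those already present. Both follow from the centralized MPOMDP setup and the conditional-independence assumption, but they should be stated explicitly so that the reduction to Theorem~\ref{eq_theorem_MI} is unambiguous.
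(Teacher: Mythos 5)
Your proposal matches the paper's own argument: the paper likewise proves this theorem by applying the strategy of Theorem~\ref{eq_theorem_MI} with $X$ replaced by the grid $G_k$ and $Z$ by the empty measurement sets (noting $\mathcal{H}(\cdot)$ is just the discrete analogue of $h(\cdot)$), and then invokes the positive-linear-combination result of \cite[p.272]{nemhauser1978analysis} for the sum over the horizon. Your added care about the ground-set bookkeeping and conditional independence only makes explicit what the paper leaves implicit, so the proof is correct and essentially identical in approach.
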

	\begin{proof} We can apply a similar strategy as per Theorem \ref{eq_theorem_MI} to prove that $I(G_k;\hat{Z}^{\emptyset}_{k+j}(A_k))$ is a monotone submodular set function, note that $\mathcal{H}(\cdot)$ is the Shannon entropy (a discrete version of differential entropy $h(\cdot)$). Further, since $V_2(A_k)$ is a positive linear combination of $I(G_k;\hat{Z}^{\emptyset}_{k+j}(A_k))$, according to \cite[p.272]{nemhauser1978analysis}, $V_2(A_k)$ is a monotone submodular set function.
	\end{proof}
	
	\subsection{Multi-objective value function for tracking and discovering}
	In this problem, we want to control the team of agents to perform both tracking  and discovering; this naturally leads to a multi-objective problem.  Specifically, we want to maximize $$V(A_k) = [V_1(A_k), V_2(A_k)]^T$$ subject to $A_k \in \mathbb{A}^S$ where $V_1$ and $V_2$ are defined in \eqref{eq_R1_Track} and \eqref{eq_R2_discover}, respectively. Multi-objective optimization provides a meaningful notion of multi-objective optimality such as the Pareto-set, which represents trade-offs between the objectives such that there is no other solution that can improve one objective without degrading any remaining objectives~\cite{Whiteson2016}.
	Online planning necessitates selecting one compromised solution from the Pareto-set on-the-fly. One approach is Robust Submodular Observation Selection (RSOS) \cite{krause2008robust}, which is robust against the worst possible objective; however, even if each $V_i$ is submodular, $V_{min} = \min_{i}V_i$ is generally not submodular. Other approaches include Weighted Sum (WS) and Global Criterion Method (GCM); simplicity of these methods are not only attractive for meeting the demands of online planning but also result in a submodular value function. In this work, we adopt GCM 
	to select the compromised solution considering the distance from the ideal solution. 
	Inspired by \cite{Koski1993}, we define the value function $V_{mo}$ (with $V_{mo}(\emptyset) = 0$) as:
	\begin{align}\label{eq_comb_reward}
	V_{mo}(A_k) =  \sum_{i=1}^2 \dfrac{V_i(A_k) - \min\limits_{A_k \in \mathbb{A}^S}V_i(A_k)}{\max\limits_{A \in \mathbb{A}^S}V_i(A_k) - \min\limits_{A_k \in \mathbb{A}^S}V_i(A_k)}.
	\end{align}

	The global criterion method admits a unique optimal solution from the Pareto-set~\cite{coello2007evolutionary}. Hence, the multi-objective problem becomes
	\begin{align}\label{eq_optimal_action}
	A^{*}_k = \argmax_{A_k \in \mathbb{A}^S} V_{mo}(A_k).
	\end{align}
	Since finding the optimal control action $A^* \in \mathbb{A}^S$ is a combinatorial optimization problem, we want to show that the multi-objective value function $V_{mo}(A)$ in \eqref{eq_comb_reward} is also a monotone submodular set function on $Z$. This enables us to use the greedy algorithm to find the optimal action that approximately maximize this multi-objective value function. 
	
	\begin{corollary}
		The multi-objective value function $V_{mo}$ in \eqref{eq_comb_reward} is a monotone submodular set function. 
	\end{corollary}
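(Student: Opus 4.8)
The plan is to exploit the fact that $V_{mo}$ is obtained from $V_1$ and $V_2$ by nothing more than a positive affine transformation, so that the two preceding results transfer to it almost immediately. First I would fix the ground set to be the (ideal) measurement set over the horizon and observe that the quantities $m_i := \min_{A_k \in \mathbb{A}^S} V_i(A_k)$ and $M_i := \max_{A_k \in \mathbb{A}^S} V_i(A_k)$ appearing in \eqref{eq_comb_reward} are fixed scalars: they do not depend on the particular subset over which submodularity is being tested. Provided each $V_i$ is non-constant on $\mathbb{A}^S$ (so that $M_i > m_i$; otherwise the normalisation is vacuous and that objective can simply be dropped), the coefficients $c_i := 1/(M_i - m_i)$ are strictly positive.

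Next I would rewrite \eqref{eq_comb_reward} as $V_{mo}(A_k) = c_1 V_1(A_k) + c_2 V_2(A_k) - \big(c_1 m_1 + c_2 m_2\big)$, i.e.\ as a positive linear combination of $V_1$ and $V_2$ shifted by a constant. By the corollary following Theorem~\ref{eq_theorem_MI} and by the theorem establishing submodularity of $V_2$, both $V_1$ and $V_2$ are monotone submodular set functions of $Z$. Invoking \cite[p.272]{nemhauser1978analysis} on closure of monotone submodular functions under nonnegative linear combinations, the function $c_1 V_1 + c_2 V_2$ is monotone submodular; subtracting the constant $c_1 m_1 + c_2 m_2$ leaves every marginal gain unchanged (so submodularity is preserved) and preserves the ordering of values on nested sets (so monotonicity is preserved). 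Hence $V_{mo}$ is monotone submodular.

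I do not anticipate a genuine obstacle here; the only points requiring care are (i) verifying that the normalisation constants are positive, which rests on the mild non-degeneracy assumption that neither $V_i$ is identically constant over $\mathbb{A}^S$, and (ii) being explicit that the additive constant, and equivalently the prescribed convention $V_{mo}(\emptyset)=0$, does not affect either monotonicity or submodularity, since both properties are statements about differences of function values rather than about the values themselves.
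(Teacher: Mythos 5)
Your proposal is correct and follows essentially the same route as the paper: both treat $V_{mo}$ as a positive linear combination of the monotone submodular functions $V_1$ and $V_2$ and invoke the closure result of \cite[p.272]{nemhauser1978analysis}. Your treatment is in fact slightly more careful than the paper's one-line argument, since you make explicit that the normalisation constants $\min_{A_k} V_i$ and $\max_{A_k} V_i$ are fixed scalars (requiring only the mild non-degeneracy $M_i > m_i$) and that the resulting additive constant affects neither marginal gains nor monotonicity.
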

	\begin{proof}
		Since $V_i(A_k))$ is a monotone submodular set function and $V_{mo}(A_k)$ is a positive linear combination of it, according to \cite[p.272]{nemhauser1978analysis}, $V_{mo}(A_k)$ is a monotone submodular set function.
	\end{proof}
	
	\subsection{Greedy search algorithm}
	
	\begin{figure*}[!tb]
		\centering
		\includegraphics[width=1.8\columnwidth]{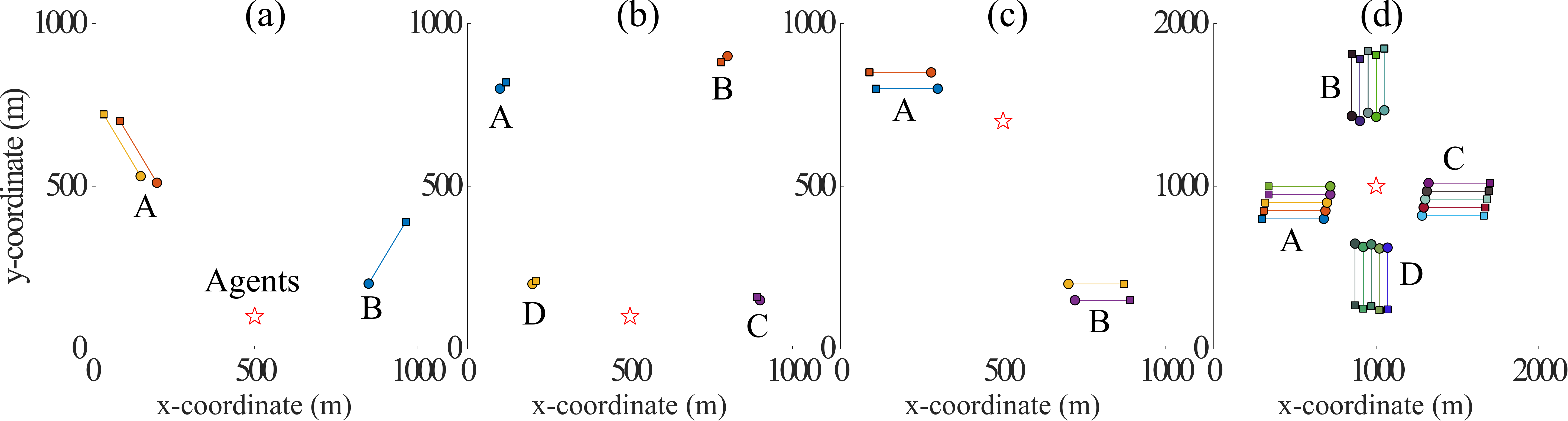}
		\caption{Setup for four scenarios: (a) Scenario 1; (b) Scenario 2; (c) Scenario 3; (d) Scenario 4. Start/Stop locations for each object are denoted by $\bigcirc/\Box$. Start locations for agents are denoted by $\star$. }
		\label{fig_scenario_setting}
	\end{figure*}

	We proved our multi-objective value function $V_{mo}(\cdot)$ is a monotone submodular set function---see \textbf{Corollary 4}. For submodular functions, \cite{nemhauser1978analysis} proved the greedy search algorithm guarantees a performance bound at $(1-1/e)$OPT, where OPT is the optimal value of the submodular function. Therefore, if the optimal value of our value function is $V_{mo}(A^*)$, we can simply state the following fundamental performance bound for our submodular value function: 
	\begin{theorem} \label{theorem_lower_bound} From \cite{nemhauser1978analysis}. Let $A_G$ be the output greedy control action  and $A^*$ be the optimal control action evaluated using brute-force method of \eqref{eq_optimal_action}. Then
		\begin{align}
		V_{mo}(A_G) \geq  (1-1/e) V_{mo}(A^*)\label{eq_theorem_lower_bound}
		\end{align}
		where $e = 2.718\dots $ is the base of the natural logarithm.
	\end{theorem}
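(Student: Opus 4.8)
The plan is to derive Theorem~\ref{theorem_lower_bound} directly from Corollary~4 and the classical greedy‑approximation result for monotone submodular maximization~\cite{nemhauser1978analysis}. The two ingredients I need are already in place: by Corollary~4 the map $A_k \mapsto V_{mo}(A_k)$ is a monotone submodular set function (over the measurement ground set induced by the agents' actions, as in Theorem~\ref{eq_theorem_MI}), and by the definition \eqref{eq_comb_reward} it is normalized and nonnegative, $V_{mo}(\emptyset)=0$. So the first step is merely to record that the greedy search of the previous subsection is precisely the incremental greedy of~\cite{nemhauser1978analysis}: writing $A_G^{(0)}=\emptyset$ and letting $A_G^{(t)}$ denote the partial assignment after $t$ rounds, each round appends the single agent--action whose marginal value $V_{mo}(A_G^{(t-1)}\cup\{\cdot\}) - V_{mo}(A_G^{(t-1)})$ is largest, and $A_G^{(S)}=A_G$. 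Since the brute‑force optimizer $A^*$ of \eqref{eq_optimal_action} likewise assigns one action to each of the $S$ agents, both solutions live in the same cardinality‑$S$ feasible set, and \eqref{eq_theorem_lower_bound} compares a greedily built size‑$S$ solution with the optimal size‑$S$ solution.

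The second step is the standard telescoping estimate. Fix a round $t$. Monotonicity gives $V_{mo}(A^*) \le V_{mo}(A^* \cup A_G^{(t-1)})$, and submodularity, adding the (at most $S$) elements of $A^*\setminus A_G^{(t-1)}$ one at a time, gives
\begin{align}
V_{mo}(A^* \cup A_G^{(t-1)}) - V_{mo}(A_G^{(t-1)}) \;\le\; \sum_{x \in A^*\setminus A_G^{(t-1)}} \big[V_{mo}(A_G^{(t-1)}\cup\{x\}) - V_{mo}(A_G^{(t-1)})\big]. \notag
\end{align}
Every term on the right is at most the marginal value realized by the greedy choice at round $t$, and there are at most $S$ of them, so $V_{mo}(A_G^{(t)}) - V_{mo}(A_G^{(t-1)}) \ge \tfrac1S\big(V_{mo}(A^*) - V_{mo}(A_G^{(t-1)})\big)$. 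Writing $\delta_t = V_{mo}(A^*) - V_{mo}(A_G^{(t)})$ this is the recurrence $\delta_t \le (1-\tfrac1S)\delta_{t-1}$, hence $\delta_S \le (1-\tfrac1S)^S \delta_0 = (1-\tfrac1S)^S V_{mo}(A^*)$ because $\delta_0 = V_{mo}(A^*) - V_{mo}(\emptyset) = V_{mo}(A^*)$. Therefore $V_{mo}(A_G) = V_{mo}(A^*) - \delta_S \ge \big(1-(1-\tfrac1S)^S\big) V_{mo}(A^*) \ge (1-1/e) V_{mo}(A^*)$, using $(1-1/S)^S \le e^{-1}$, which is exactly \eqref{eq_theorem_lower_bound}.

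The one place that requires real care — the expected main obstacle — is reconciling the agent‑by‑agent nature of the greedy search with the ``pick one ground‑set element per round'' step of~\cite{nemhauser1978analysis}, i.e.\ making sure the per‑agent assignment behaves like a cardinality constraint rather than a more general matroid constraint, for which greedy would only guarantee a factor $1/2$. The clean way I would handle this is to take the ground set to be the measurements $\mathbb{Z}$, consistent with Theorem~\ref{eq_theorem_MI}, where submodularity is established as a property ``of $Z$'': an agent's action contributes one block of ideal measurements $\hat{Z}^i_{k+1:k+H}(a^i_k)$, the greedy search selects $S$ such blocks, and the budget is simply the number of agents. With this identification the estimate above goes through verbatim, and the remaining hypotheses (nonnegativity, $V_{mo}(\emptyset)=0$, monotonicity) are immediate from \eqref{eq_comb_reward} and Corollary~4 — exactly what~\cite{nemhauser1978analysis} requires.
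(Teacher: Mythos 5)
Your telescoping argument is the standard Nemhauser--Wolsey--Fisher proof of the $(1-1/e)$ bound, and it is exactly the argument behind the result the paper invokes: the paper itself gives no derivation at all for Theorem~\ref{theorem_lower_bound}, stating it ``From \cite{nemhauser1978analysis}'' after establishing monotone submodularity of $V_{mo}$ in Corollary~4. So on the pure recurrence $\delta_t \le (1-\tfrac1S)\delta_{t-1}$ you are reconstructing the cited proof rather than deviating from it.

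However, the step you yourself single out as the main obstacle is not actually resolved by your fix, and this is a genuine gap. Re-describing the ground set as measurement blocks $\hat{Z}^i_{k+1:k+H}(a^i_k)$ does not convert the one-action-per-agent requirement into a plain cardinality budget: Algorithm~\ref{greedy_algo} removes each planned agent from $U$, and the brute-force optimum of \eqref{eq_optimal_action} likewise assigns exactly one action to every agent, so both $A_G$ and $A^*$ are bases of a partition matroid (one block per agent), not arbitrary $S$-subsets of a common ground set. Your key inequality---that each marginal $V_{mo}(A_G^{(t-1)}\cup\{x\}) - V_{mo}(A_G^{(t-1)})$ for $x\in A^*\setminus A_G^{(t-1)}$ is dominated by the greedy gain at round $t$---requires $x$ to be an admissible choice at round $t$, which fails whenever $x$ is an action belonging to an agent the greedy has already planned with a different action. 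This is precisely the distinction between the uniform-matroid (cardinality) setting, where \cite{nemhauser1978analysis} gives $1-1/e$, and partition-matroid constraints, where the classical greedy guarantee degrades to $1/2$; your argument silently assumes the former. To close the gap you would need an additional argument specific to this problem, e.g.\ that greedy over all (agent, action) pairs under a pure budget of $S$ never benefits from assigning two blocks to the same agent, or you would have to settle for the weaker matroid bound. The paper does not confront this point only because it states the theorem by citation rather than by proof.
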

	Hence, we propose using the greedy search algorithm by simply adding agents sequentially and picking the next agent which provides the maximum value function $V_{mo}(\cdot)$ as presented in Algorithm \ref{greedy_algo}. 
	
	\begin{algorithm}[!h]
		\caption{Greedy algorithm}\label{greedy_algo}
		\begin{algorithmic}[1] 
			\State \textbf{Input}: $V_{mo}(\cdot), \mathbb{A}$ \Comment{Value function and the action space}
			\State \textbf{Output}: $A_G \in \mathbb{A}^S $ \Comment{Greedy control actions for all agents}
			\State $A_G \gets \emptyset$ \Comment{Initialize the greedy control action}
			\State $P \gets \emptyset$ \Comment{Initialize the agent planned list}
			\State $U \gets \{1,\dots,S \}$ \Comment{Initialize list of agents to plan}
			\While{$U\not=\emptyset$} 
			\ForEach{ $s \in U$}
			\State $A^{s},V^s_c= \argmax\limits_{A \in \mathbb{A}^{V\cup \{s\}}} V_{mo}(A)$ \Comment{Find the best action and value function for each agent in $U$}
			\EndFor
			\State $s^*= \argmax\limits_{s \in U} V^s_c$ \Comment{Select the agent $s^*$ that provides the best value function}
			\State $A_G \gets A_G \cup \{A^{s^*}\} $ \Comment{Save the greedy control action for agent $s^*$}
			\State $P \gets P \cup \{s^*\} $ \Comment{Add agent $s^*$ into the planned list}
			\State $U \gets U \setminus \{s^*\} $ \Comment{Remove agent $s^*$ from the list of agents to plan}
			\EndWhile\label{euclidendwhile}
			\State \textbf{return} $A_G$
		\end{algorithmic}
	\end{algorithm}
	
    \section{Experiments}
    We evaluate the proposed value function using a series of comprehensive synthetic experiments since we can control all of the parameters of the problem, especially with a time-varying number of agents and objects. 
    We compare three planning algorithm formulations: \textbf{\textit{(i)}} using the single objective value function $V_1(\cdot)$ in \eqref{eq_R1_Track} for tracking. \textbf{\textit{(ii)}} using a single objective value function based on our \textit{new} discovery value function $V_2(\cdot)$ in \eqref{eq_R2_discover}. \textbf{\textit{(iii)}} using our proposed multi-objective value function $V_{mo}(\cdot)$.

    We use optimal sub-pattern assignment (OSPA)~\cite{schuhmacher2008consistent} to measure performance. 
    We report~\textbf{OSPA Dist} as the \textit{\textbf{main}} metric to evaluate the \textit{overall} team performance since it incorporates both tracking and discovery indicators. For further insights into our planning formulations, we also report:~\textbf{\textit{(i)}}~\textbf{OSPA Loc} as a localization accuracy measure, \textbf{\textit{(ii)}} \textbf{OSPA Card} as an object discovery performance measure; and \textbf{\textit{(iii)}} \textbf{Search Area Entropy} as the average entropy of the occupancy grid to measure the coverage area of the team. For demonstration, a team of quad-copter UAVs flying at different altitudes is considered. The detailed parameter settings are provided in the appendix, while scenario setups are shown in Figure~\ref{fig_scenario_setting}. Our experiments considered four different scenarios and two different detection-based sensors subject to noisy measurements.

    \textbf{Scenario 1 (FastMoving)}:~\textit{Three fast moving objects in two groups travelling in the same direction.}
    A team of agents  starts at  $[500,100]^T$~m as depicted in Figure~\ref{fig_scenario_setting}a. 
    
    \textbf{Scenario 2 (LateBirth)}:~\textit{Late birth objects}.
    We investigate a searching and tracking scenario in Figure~\ref{fig_scenario_setting}b) with \textit{four} slow-moving mobile objects using a team of agents. Here, the groups of objects $D$ and $C$ enter the scene when the agents are out of their detection range---late birth. This scenario \textit{favours} agent planning with the discovery value function encouraging exploration and demonstrates the effectiveness of our multi-objective value function with its competing tracking and discovery objectives. 
    
    \textbf{Scenario 3 (Opposite)}:~\textit{Four objects in two groups (A and B) moving rapidly in opposing directions}. Figure~\ref{fig_scenario_setting}c illustrates the scenario. We use this setting to confirm the effectiveness of our multi-objective value function. Now, the possibility to discover group $B$ out of the sensor detection range must be achieved through exploration while planning to track group $A$ in the vicinity of the agents is immediately rewarded by the tracking objective.

	\begin{table*}[tb!]
		\centering
		\caption{Comparing multi-agent planning for tracking mobile objects using our multi-objective value function $V_{mo}$ across Scenarios 1, 2, 3 and 4 with detection range $r_d = 200$~m. $V_1$ and $V_2$ are baselines and the results are averaged over 20 MC trials.}
		\label{tab_scenario1234}
		\resizebox{2.0\columnwidth}{!}{%
			\begin{tabular}{@{}|c|c|r|r|r|r|r|r|r|r|@{}}
				\toprule
				&  & \multicolumn{4}{c|}{\textbf{Scenario 1 (FastMoving)}} & \multicolumn{4}{c|}{\textbf{Scenario 2 (LateBirth)}} \\ \midrule
				& Indicators & \multicolumn{1}{c|}{\begin{tabular}[c]{@{}c@{}}Overall\\ Performance\end{tabular}} & \multicolumn{1}{c|}{\begin{tabular}[c]{@{}c@{}}Tracking\\ Performance\end{tabular}} & \multicolumn{2}{c|}{\begin{tabular}[c]{@{}c@{}}Discovery \\ Performance\end{tabular}} & \multicolumn{1}{c|}{\begin{tabular}[c]{@{}c@{}}Overall\\ Performance\end{tabular}} & \multicolumn{1}{c|}{\begin{tabular}[c]{@{}c@{}}Tracking\\ Performance\end{tabular}} & \multicolumn{2}{c|}{\begin{tabular}[c]{@{}c@{}}Discovery\\ Performance\end{tabular}} \\ \midrule
				\textit{Agents} & \textit{\begin{tabular}[c]{@{}c@{}}Value \\ Functions\end{tabular}} & \multicolumn{1}{c|}{\textbf{\begin{tabular}[c]{@{}c@{}}OSPA \\ Dist (m)\end{tabular}}} & \multicolumn{1}{c|}{\textbf{\begin{tabular}[c]{@{}c@{}}OSPA \\ Loc (m)\end{tabular}}} & \multicolumn{1}{c|}{\textbf{\begin{tabular}[c]{@{}c@{}}OSPA \\ Card  (m)\end{tabular}}} & \multicolumn{1}{c|}{\textbf{\begin{tabular}[c]{@{}c@{}}Search Area \\ Entropy (nats)\end{tabular}}} & \multicolumn{1}{c|}{\textbf{\begin{tabular}[c]{@{}c@{}}OSPA \\ Dist (m)\end{tabular}}} & \multicolumn{1}{c|}{\textbf{\begin{tabular}[c]{@{}c@{}}OSPA \\ Loc (m)\end{tabular}}} & \multicolumn{1}{c|}{\textbf{\begin{tabular}[c]{@{}c@{}}OSPA \\ Card (m)\end{tabular}}} & \multicolumn{1}{c|}{\textbf{\begin{tabular}[c]{@{}c@{}}Search Area\\ Entropy (nats)\end{tabular}}} \\ \midrule
				\multirow{3}{*}{$S= 3$} & $V_1$ & 33.9 & 4.4 & 29.5 & 0.23 & 57.0 & 4.0 & 53.0 & 0.22 \\
				& $V_2$ & 21.2 & 9.7 & 11.5 & 0.12 & \textbf{41.1} & 10.3 & 30.8 & 0.12 \\
				& $V_{mo}$ & \textbf{17.7} & 6.1 & 11.6 & 0.17 & 52.1 & 5.2 & 46.9 & 0.17 \\ \midrule
				\multirow{3}{*}{$S = 5$} & $V_1$ & 25.4 & 5.1 & 20.3 & 0.2 & 53.4 & 3.6 & 49.8 & 0.17 \\
				& $V_2$ & 20.3 & 9.2 & 11.1 & 0.09 & 43.9 & 9.5 & 34.4 & 0.09 \\
				& $V_{mo}$ & \textbf{16.8} & 5.7 & 11.1 & 0.13 & \textbf{38.8} & 5.1 & 33.7 & 0.11 \\ \midrule
				&  & \multicolumn{4}{c|}{\textbf{Scenario 3 (Opposite)}} & \multicolumn{4}{c|}{\textbf{Scenario 4 (Explosion)}} \\ \midrule
				\multirow{3}{*}{$S= 3$} & $V_1$ & 51.7 & 3.0 & 48.7 & 0.24 & 53.7 & 3.1 & 50.6 & 0.32 \\
				& $V_2$ & 18.3 & 12.8 & 5.5 & 0.12 & 55.0 & 29.8 & 25.2 & 0.28 \\
				& $V_{mo}$ & \textbf{11.1} & 5.9 & 5.2 & 0.18 & \textbf{40.7} & 9.7 & 31.0 & 0.32 \\ \midrule
				\multirow{3}{*}{$S = 5$} & $V_1$ & 51.2 & 2.9 & 48.3 & 0.24 & 36.9 & 5.0 & 31.9 & 0.30 \\
				& $V_2$ & \textbf{10.5} & 6.9 & 3.6 & 0.09 & 31.2 & 19.5 & 11.7 & 0.25 \\
				& $V_{mo}$ & \textbf{10.5} & 5.9 & 4.1 & 0.15 & \textbf{17.4} & 6.4 & 11.1 & 0.29 \\ \bottomrule
			\end{tabular}%
		}

	\end{table*}
	
	\begin{figure}[tb!]
		\centering
		\includegraphics[width=0.8\columnwidth]{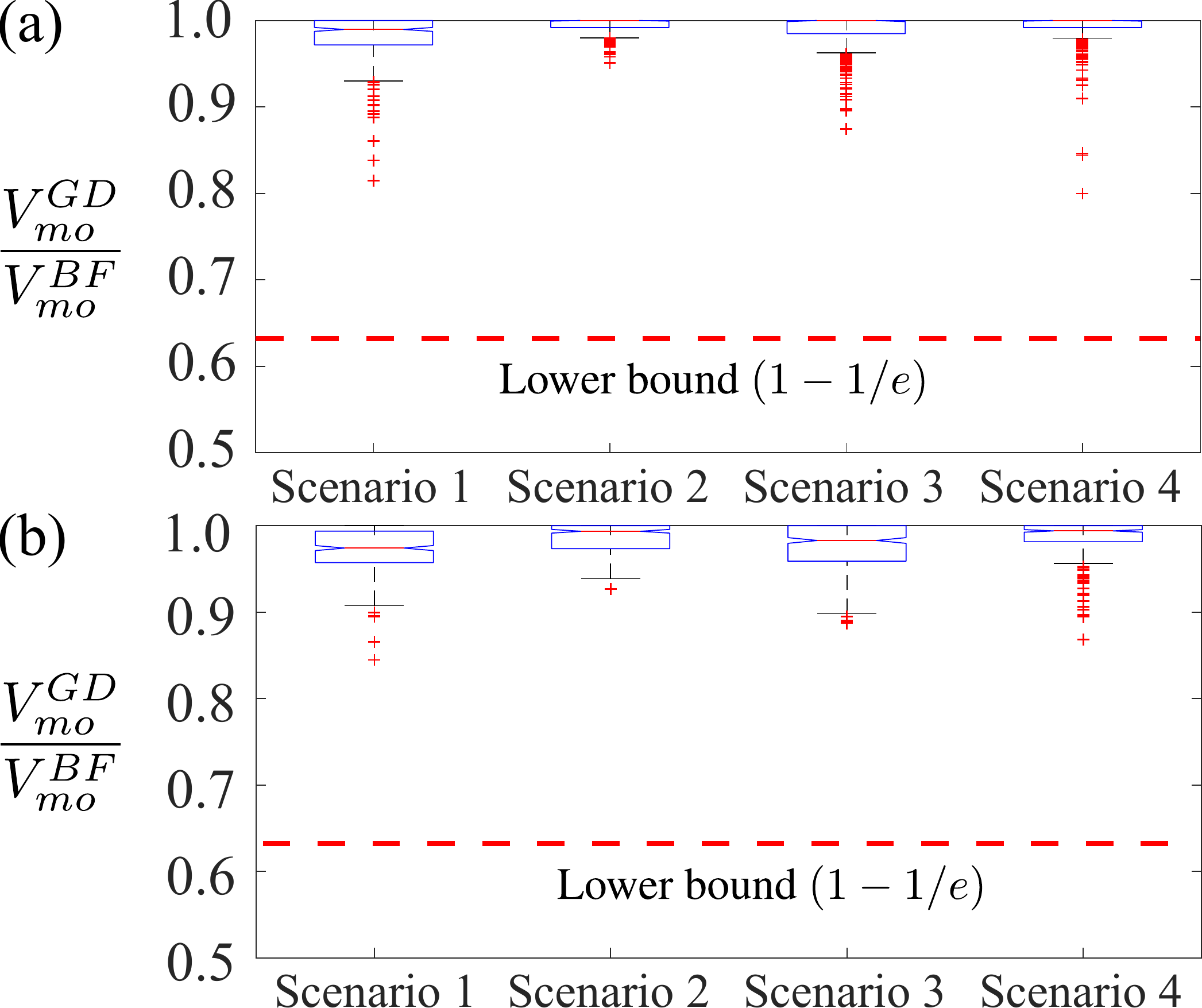} 
		\caption{Multi-objective value function ratio between the greedy $V_{mo}^{GD}$  and brute-force $V_{mo}^{BF}$ algorithms with agents (a) $S = 2$ and (b) $S=3$ (20 MC runs, range and bearing based sensor with $r_d=200$~m).}
		\label{fig_reward_ratio}
	\end{figure}
	
	\textbf{Scenario 4 (Explosion)}:~\textit{Multiple groups of fast moving objects in opposing directions}. Here, we consider a team of agents to search and track \textit{20 fast moving mobile objects} as shown in  Figure~\ref{fig_scenario_setting}d.
    
    \textbf{Detection-based sensors:} \textbf{\textit{(i)}} We considered agents equipped with a range and bearing based sensor---common in wildlife tracking~\cite{cliff2015online} for example. Let $u^s = [p^s_x,p^s_y,p^s_z]^T$ be the position of agent $s$, $x_p = [p_x,p_y,1]^T$ be the position of object $x$, each detected object $x$ leads to a noisy measurement $z$ of range and bearing given by: $z = \big[\arctan{[(p_y - p_y^s)/(p_x -p_x^s)]},||x_p-u^s||\big]^T + v$. Here,$|| \cdot ||$ is the Euclidean norm; $v \sim \mathcal{N}(0,R)$ with $R = \mathrm{diag}(\sigma^2_{\phi},\sigma^2_{\rho})$ where $\sigma_{\phi}= \sigma_{0,\phi} + \beta_\phi || x_p - u^s ||$, $\sigma_{\rho} = \sigma_{0,\rho}+ \beta_\rho || x_p - u^s||$. 
    \textbf{\textit{(ii)}} To demonstrate the sensor-agnostic nature of our approach, we consider agents equipped with a vision-based sensor. Each detected object $x$ leads to a measurement $z$ of noisy $xy$ positions, given by: $z = \big[p_x,p_y]^T + v$. Here, $v \sim \mathcal{N}(0,R)$ with $R = \mathrm{diag}(\sigma^2_{x},\sigma^2_{y})$.

    \textbf{Experiment 1~}\textit{Comparing greedy and brute force algorithm results for our submodular multi-objective value function}: 
    Figure~\ref{fig_reward_ratio} depicts the ratio of our multi-objective value function obtained from greedy and brute-force algorithms for the four scenarios. The result obtained from 20 Monte-Carlo (MC) runs for each scenario agrees with the performance guarantee of the greedy algorithm to yield an approximately optimal solution with a bounded performance guarantee at $(1-1/e)$ OPT.

    \textbf{Experiment 2:} \textit{Comparing multi-objective multi-agent planning with single objective multi-agent planning}. 
    Table~\ref{tab_scenario1234} compares results for scenario 1, 2, 3 and 4 
    collected from 20 MC runs for agents with \textit{range and bearing} based sensors. 
    It is expected that the average search area entropy is smallest for $V_2$ since it encourages agents to explore the search area.  Consequently, $V_2$ can also be seen to generate the best performance in term so of OSPA cardinality---\textbf{OSPA Card}. In contrast, we can see that the multi-agent planning with the single value function (to encourage only tracking accuracy) $V_1$, achieves improved results for object localization accuracy only (low \textbf{OSPA Loc} results) but at the expense of missing objects often out of the range of the sensors (as seen by significantly large \textbf{OSPA Card} results).  Most importantly, our results verify that $V_{mo}$ performs best in term of overall tracking and cardinality accuracy (reported by \textbf{OSPA Dist}) since $V_{mo}$ not only rewards agents for undertaking the discovery of new objects but also rewards agents for accurately tracking discovered objects. 

	Figure~\ref{fig_scenario3} shows the grid occupancy probability and the trajectories of the agents for scenario 3. 
	The results demonstrate the effectiveness of our proposed planning method, where agents not only track but discover distant mobile objects. 
    
    \begin{figure*}[!tb]
        \centering
        \includegraphics[width=1.6\columnwidth]{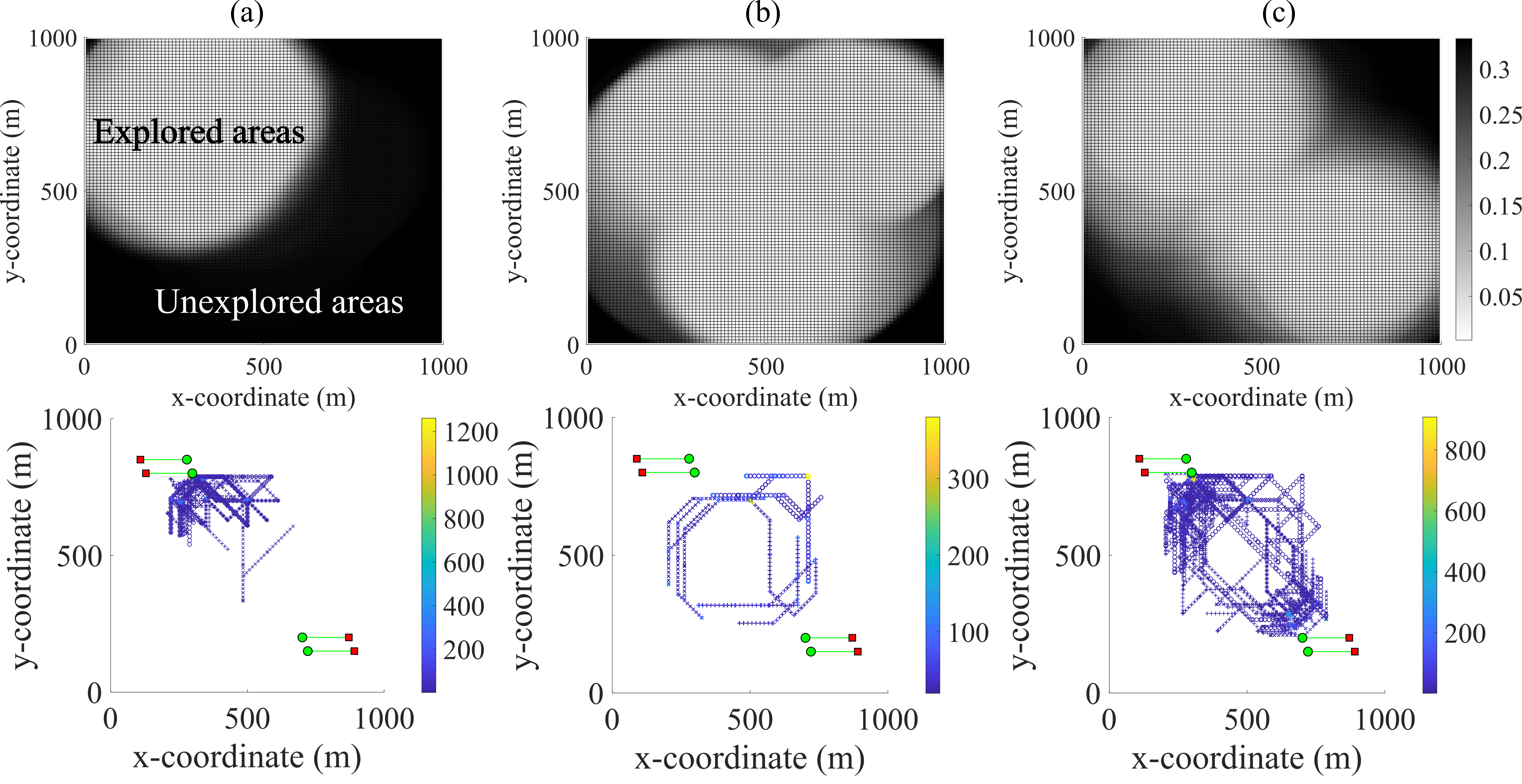}  
        \caption{\textbf{Scenario 3}. Grid occupancy probability (top) and heat map (bottom) of trajectories for $3$ agents over $20$ MC runs with $r_d=200$~m using (a) $V_1$. Late birth group $B$ never discovered, (b) $V_2$. Extensive exploration, and (c) $V_{mo}$. Discovers the late birth group $B$ whilst tracking both groups.}
        \label{fig_scenario3}
    \end{figure*}
    
    \begin{figure}[!tb]
        \centering
        \includegraphics[width=0.8\columnwidth]{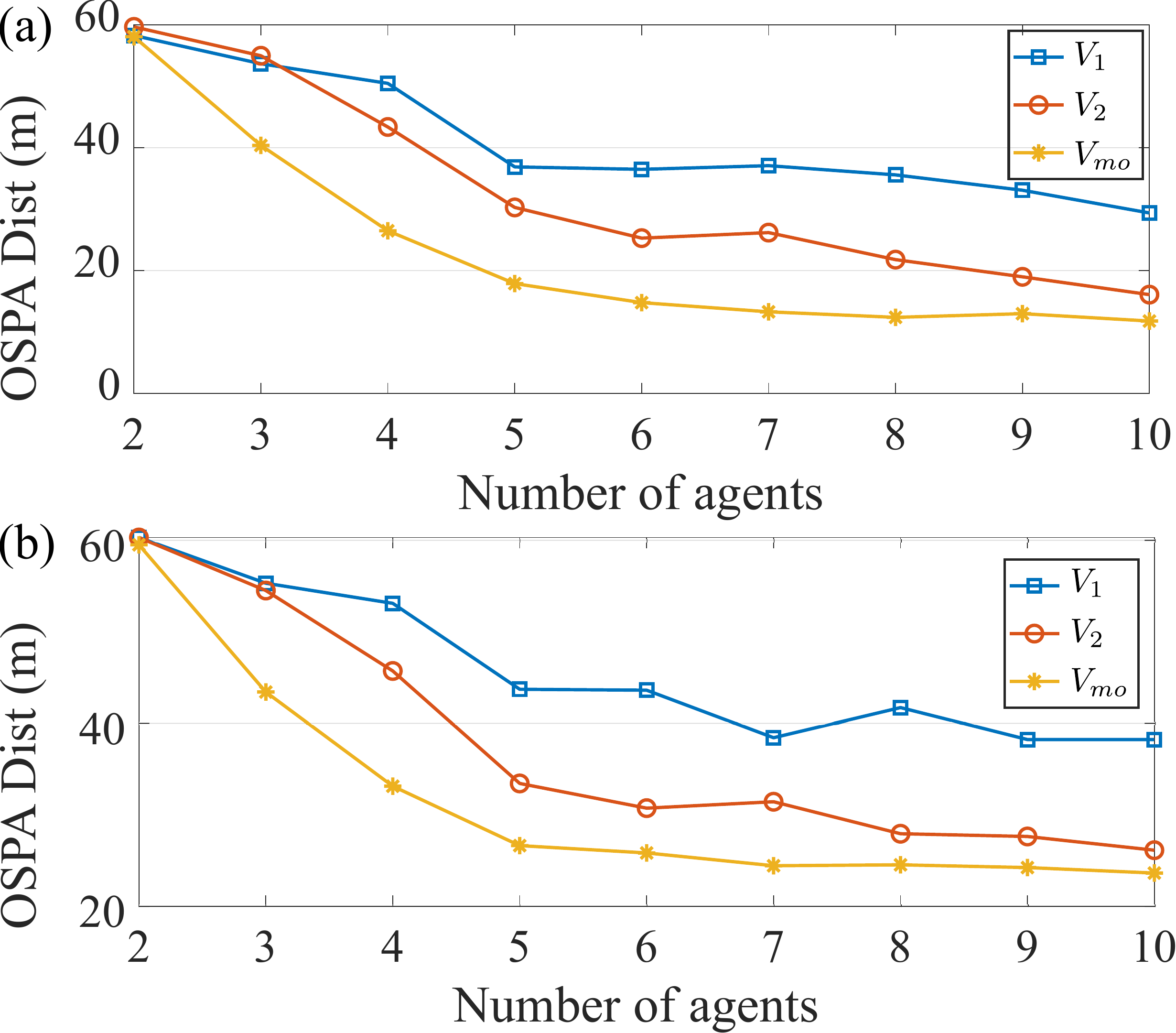}  
        \caption{Overall tracking performance over 20 MC runs based on multi-agent planning with our multi-objective value function $V_{mo}$ compare with the single objective value functions $V_1$ and $V_2$ when the number of agents are increased from $2$ to $10$ for Scenario 4 (\textbf{Explosion}) with $r_d = 200$~m using (a) agents with \textit{range and bearing} based sensors, (b) agents with \textit{vision} based sensors.}
        \label{fig_scenario4_ospa_dist_vs_ns}
    \end{figure}
    
    \textbf{Experiment 3:~}\textit{Explore the asymptotic behavior of tracking performance with an increasing number of agents for our planning formulation}.
    Figure~\ref{fig_scenario4_ospa_dist_vs_ns} depicts the overall mean tracking accuracy from 20 MC runs for agent teams with each detection-based sensor. It confirms that planning with $V_{mo}$ consistently performs better than $V_1$ or $V_2$ alone. As expected, when the number of agents increases, $V_1$ and $V_2$ tracking performances improve and approach that of $V_{mo}$. Interestingly, multi-agent planning with a single exploration objective closely approaches the tracking performance of the multi-objective value function when the team of agents is large enough to cover the survey area with its range limited sensors and all objects become visible to the agents.

	\section{Conclusion}
    In this paper, we have formulated a multi-objective planning approach for multi-agent tracking and searching for mobile objects. We have established that our formulation results in a value function that is monotone and submodular. We presented a series of extensive experimental results to demonstrate the effectiveness of our method and performance guarantees when using the low-cost greedy algorithm to determine control actions for the multi-agent. 
    
    We consider a centralized MPOMDP approach where scalability can be a limitation. Factored-POMDP \cite{oliehoek2008exploiting} can be employed to achieve further system scalability. We require reliable and fast communications between all agents and their centralized controller. If there are any delays in communications, the problem can be formulated as an MPOMDP with delayed communications. It is extremely challenging to plan and track mobile objects in an online manner without any communications among agents as in Dec-POMDP, hence it is an open question for future work. 
    
    \section{Acknowledgments}
	This work was jointly supported by the Western Australia Parks and Wildlife (WA Parks), the Australian Research Council (LP160101177,~DP160104662), the Defense Science and Technology Group (DSTG), and The Shultz Foundation. 
	
	\section{Appendix}

	\subsection{Parameter settings for experiments}
	The search areas for the first three scenarios and scenario 4 are $1000$~m $\times 1000$~m and $2000$~m $\times 2000$~m, respectively. Each agent is controlled to fly at a fixed and different altitude (\ie, $5$~m altitude gap between each agent) to prevent collisions with other team members. The minimum altitude starts at $30$~m for the first agent and increases $5$~m for each additional agent. Further, all objects are assumed exist on a horizontal ground plane 
	to speed up the numerical experiments by tracking in 2D. Each object state $\mathbf{x} = (x,l)$ is uniquely identified by its label $l$, while its motion state $x = [p_x,\dot{p}_x,p_y,\dot{p}_y]^T$ comprises of object's position and velocity in Cartesian coordinates. Each object moves in accordance with the constant velocity (CV) model given by $x_k = F^{CV}x_{k-1} +q^{CV}_{k-1}$. Here, $F^{CV} = [1,T_0;0,T_0] \otimes I_2$, $T_0$ is the sampling interval ($T_0 = 1$~s for our experiments), $\otimes$ denotes for the Kronecker tensor product; $I_2$ is the $2\times2$ identity matrix; $q^{CV}_{k-1} \sim \mathcal{N}(0,Q^{CV})$ is a $4\times 1$ zero mean Gaussian process noise, with co-variance $Q^{CV} = \sigma^2_{CV}[T^3_0/3,T^2_0/2;T^2_0/2,T_0]\otimes I_2$. The detection probability is $p_D(u_s,x_p) = $
	\begin{align}
	\begin{cases} 
	0.98 & ||x_p-u^s|| \leq r_{d} \\
	\max(0,0.98 - (||x_p-u^s|| - r_{d})\hbar) & \text{otherwise;}
	\end{cases} \notag
	\end{align} 
	where $r_d$ is the sensor detection range and $\hbar = 0.008$~m$^{-1}$. The sensor reports \textit{false detections} or false-alarm measurements following a Poison RFS with a clutter rate $\lambda = 0.2$, where each agent collects at most one measurement per time step for each object, either from the real objects, clutters (false detections) or the measurement is empty (missed detections). For sensor noise, the \textit{range and bearing} based measurement is corrupted with a zero mean Gaussian process noise that depends on the distance between objects and agents, \ie, $v \sim \mathcal{N}(0,R)$ with $R = \mathrm{diag}(\sigma^2_{\phi},\sigma^2_{\rho})$ where $\sigma_{\phi}= \sigma_{0,\phi} + \beta_\phi || x_p - u^s ||$, $\sigma_{\rho} = \sigma_{0,\rho}+ \beta_\rho || x_p - u^s||$; $\sigma_{0,\phi} = 2\pi/180$~rad, $\beta_{\phi} = 1.7 \cdot 10^{-5}$~rad/m, $\sigma_{0,\rho} = 10$~m, and $\beta_{\rho} = 5\cdot 10^{-3}$. Similarly, for \textit{vision-based} sensor, each detected object $x$ leads to a measurement $z$ of noisy $x-y$ positions, given by: $z = \big[p_x,p_y]^T + v$. Here, $v \sim \mathcal{N}(0,R)$ with $R = \mathrm{diag}(\sigma^2_{x},\sigma^2_{y})$ where $\sigma_{x}= \sigma_y =  \sigma_{0,xy} + \beta_{xy} || x_p - u^s ||$ with $\sigma_{0,xy} = 10$~m, and $\beta_{xy} = 1\cdot 10^{-2}$. 
	The grid size is $100 \times 100$ across four scenarios. This corresponds to a grid cell of $10$~m $\times 10$~m for scenario 1,2 and 3 and a grid cell of $20$~m $\times 20$~m for scenario 4. The total time is $200$~s. The agent does not have any prior knowledge about object's state, thus it uses the initial birth probability $r_B = 0.005$, and a Gaussian density $p_B = \mathcal{N}(x;m_B,Q_B)$ with $m_B = [500,0,500,0]^T$ and $Q_B = \mathrm{diag}([500,10,500,10])$.
	
		\subsection{Multi-agent POMDP (MPOMDP)}
	Multi-agent POMDP is a centralized control framework for multiple agents wherein each agent shares its observations via communications to a centralized controller. Let $\mathcal{F}(A)$ denote the class of finite subsert of $A$. An MPOMDP is described by a tuple $\Big[S, H, \mathcal{F}(\mathbb{X}) \times \mathbb{U}^S, \mathbb{A}^S,\mathcal{F}(\mathbb{Z})^S, \mathcal{T},\mathcal{R}, \mathcal{O} \Big]$ where
	\begin{itemize}
		\item $S$ is the number of agents.
		\item $H$ is the look-ahead horizon.
		\item $\mathcal{F}(\mathbb{X}) \times \mathbb{U}^S$ is the space, wherein each element is an ordered pair $(X,U)$, with $X$ is the object state and $U = [u^1,\dots,u^S]^T \in \mathbb{U}^S$ is states of $S$ agents. 
		\item $\mathbb{A}^S = \mathbb{A} \times \dots \times \mathbb{A}$ is the control action space for $S$ agents resulting in the joint action $A = [a_1,\dots,a^S]^T \in \mathbb{A}^S $
		\item $\mathcal{F}(\mathbb{Z})^S = \mathcal{F}(\mathbb{Z}) \times \dots \times \mathcal{F}(\mathbb{Z})$ is the space of joint observations resulting in the joint observation $Z = [Z^1,\dots,Z^S]^T \in \mathcal{F}(\mathbb{Z})^S$.
		\item $\mathcal{T}: \big[\mathcal{F}(\mathbb{X}) \times \mathbb{U}^S\big]  \times \big[\mathcal{F}(\mathbb{X}) \times \mathbb{U}^S\big] \times \mathbb{A}^S \rightarrow [0,1] $ defines the transition probabilities $\mathrm{Pr}\big((X',U')|(X,U),A\big)$.
		\item $\mathcal{R}: \big[\mathcal{F}(\mathbb{X}) \times \mathbb{U}^S\big]  \times \mathbb{A}^S \rightarrow \mathbb{R}$ defines the immediate reward of performing action $A$ in state $\big[\mathcal{F}(\mathbb{X}) \times \mathbb{U}^S\big]$.
		\item $\mathcal{O}: \mathcal{F}(\mathbb{Z})^S \times  \big[\mathcal{F}(\mathbb{X}) \times \mathbb{U}^S\big] \times \mathbb{A}^S \rightarrow [0,1] $ defines the joint observation probabilities $\mathrm{Pr}\big(Z|(X',U'),A\big)$.
	\end{itemize}
	\subsection{The Optimal Sub-Pattern Assignment (OSPA)}
	Let $\mathcal{F}(\mathbb{X})$ be the space of finite subsets of $\mathbb{X}$. According to \cite{schuhmacher2008consistent}, let $\bar{d}_p^{(c)}(X,Y)$ be the ~\textbf{OSPA Dist}  between $X,Y \in \mathcal{F}(\mathbb{X})$  with order $p$ and cutoff $c$. Let $m$ is the cardinality of $X$ and $n$ is cardinality of $Y$, with $m \leq n$, $\bar{d}_p^{(c)}(X,Y)$ is defined as 
	\begin{align}
	&\bar{d}_p^{(c)}(X,Y) \notag \\ &= \Bigg(\dfrac{1}{n} \bigg( \min_{\pi \in \Pi_n} \sum_{i=1}^m d^{(c)}(x^i,y^{\pi(i)})^p +c^p(n-m) \bigg) \Bigg)^{1/p}
	\end{align}
	where $d^{(c)}(x,y) = \min(c,d(x,y))$, in which $d(\cdot,\cdot)$ is a metric on the single object state space $\mathbb{X}$. If $m > n$, then $\bar{d}_p^{(c)}(X,Y) \triangleq \bar{d}_p^{(c)}(Y,X)$.
	
	The \textbf{OSPA Dist} is comprised of two components: \textbf{OSPA Loc} $\bar{d}_{p,\text{loc}}^{(c)}$ and \textbf{OSPA Card} $\bar{d}_{p,\text{card}}^{(c)}$ to account for localization and cardnality errors. These components are given by with $m \leq n$: 
	\begin{align}
	\bar{d}_{p,\text{loc}}^{(c)}(X,Y) &=  \bigg( \dfrac{1}{n}  \min_{\pi \in \Pi_n} \sum_{i=1}^m d^{(c)}(x^i,y^{\pi(i)})^p \bigg)^{1/p}, \\
	\bar{d}_{p,\text{card}}^{(c)}(X,Y) &= \bigg(\dfrac{c^p(n-m)}{n}\bigg)^{1/p}
	\end{align}
	and if $m >n$, then $\bar{d}_{p,\text{loc}}^{(c)}(X,Y) \triangleq \bar{d}_{p,\text{loc}}^{(c)}(Y,X)$, $\bar{d}_{p,\text{card}}^{(c)}(X,Y) \triangleq \bar{d}_{p,\text{card}}^{(c)}(Y,X)$.


\begin{thebibliography}{}

\bibitem[\protect\citeauthoryear{Amato and Oliehoek}{2015}]{Amato2015}
Amato, C., and Oliehoek, F.
\newblock 2015.
\newblock Scalable planning and learning for {Multiagent POMDPs}.
\newblock In {\em AAAI},  1995--2002.

\bibitem[\protect\citeauthoryear{Araya \bgroup et al\mbox.\egroup
  }{2010}]{Araya2010nips}
Araya, M.; Buffet, O.; Thomas, V.; and Charpillet, F.
\newblock 2010.
\newblock A {POMDP} extension with belief-dependent rewards.
\newblock In {\em NeurIPS 23},  64--72.

\bibitem[\protect\citeauthoryear{Beard \bgroup et al\mbox.\egroup
  }{2017}]{beard2015void}
Beard, M.; Vo, B.-T.; Vo, B.-N.; and Arulampalam, S.
\newblock 2017.
\newblock Void probabilities and {Cauchy-Schwarz} divergence for generalized
  labeled multi-{B}ernoulli models.
\newblock {\em IEEE Trans. on Sig. Proc.}  5047--5061.

\bibitem[\protect\citeauthoryear{Beck \bgroup et al\mbox.\egroup
  }{2018}]{beck2018collaborative}
Beck, Z.; Teacy, W.~L.; Rogers, A.; and Jennings, N.~R.
\newblock 2018.
\newblock Collaborative online planning for automated victim search in disaster
  response.
\newblock {\em RAS} 100:251--266.

\bibitem[\protect\citeauthoryear{Bernstein \bgroup et al\mbox.\egroup
  }{2002}]{Bernstein2002}
Bernstein, D.~S.; Givan, R.; Immerman, N.; and Zilberstein, S.
\newblock 2002.
\newblock The complexity of decentralized control of {M}arkov decision
  processes.
\newblock {\em MOS} 27(4):819--840.

\bibitem[\protect\citeauthoryear{Blackman and
  Popoli}{1999}]{blackman1999design}
Blackman, S., and Popoli, R.
\newblock 1999.
\newblock {\em Design and analysis of modern tracking systems}.
\newblock Artech House.

\bibitem[\protect\citeauthoryear{Charrow, Michael, and
  Kumar}{2015}]{charrow2015active}
Charrow, B.; Michael, N.; and Kumar, V.
\newblock 2015.
\newblock Active control strategies for discovering and localizing devices with
  range-only sensors.
\newblock In {\em Algo. Found. of Rob.},  55--71.
\newblock Springer.

\bibitem[\protect\citeauthoryear{Cliff \bgroup et al\mbox.\egroup
  }{2015}]{cliff2015online}
Cliff, O.~M.; Fitch, R.; Sukkarieh, S.; Saunders, D.; and Heinsohn, R.
\newblock 2015.
\newblock Online localization of radio-tagged wildlife with an autonomous
  aerial robot system.
\newblock In {\em RSS}.

\bibitem[\protect\citeauthoryear{Coello \bgroup et al\mbox.\egroup
  }{2007}]{coello2007evolutionary}
Coello, C. A.~C.; Lamont, G.~B.; Van~Veldhuizen, D.~A.; et~al.
\newblock 2007.
\newblock {\em Evolutionary algorithms for solving multi-objective problems},
  volume~5.
\newblock Springer.

\bibitem[\protect\citeauthoryear{Cover and Thomas}{2012}]{cover2012elements}
Cover, T.~M., and Thomas, J.~A.
\newblock 2012.
\newblock {\em Elements of information theory}.
\newblock John Wiley \& Sons.

\bibitem[\protect\citeauthoryear{Dames and Kumar}{2015}]{dames2015autonomous}
Dames, P., and Kumar, V.
\newblock 2015.
\newblock Autonomous localization of an unknown number of targets without data
  association using teams of mobile sensors.
\newblock {\em IEEE Trans. on Auto. Sci. and Eng.} 12(3):850--864.

\bibitem[\protect\citeauthoryear{Dames, Tokekar, and
  Kumar}{2017}]{dames2017detecting}
Dames, P.; Tokekar, P.; and Kumar, V.
\newblock 2017.
\newblock Detecting, localizing, and tracking an unknown number of moving
  targets using a team of mobile robots.
\newblock {\em IJRR} 36(13-14):1540--1553.

\bibitem[\protect\citeauthoryear{Elfes}{1989}]{elfes1989using}
Elfes, A.
\newblock 1989.
\newblock Using occupancy grids for mobile robot perception and navigation.
\newblock {\em Computer} 22(6):46--57.

\bibitem[\protect\citeauthoryear{Gerasenko \bgroup et al\mbox.\egroup
  }{2001}]{gerasenko2001beacon}
Gerasenko, S.; Joshi, A.; Rayaprolu, S.; Ponnavaikko, K.; and Agrawal, D.~P.
\newblock 2001.
\newblock Beacon signals: what, why, how, and where?
\newblock {\em Computer} 34(10):108--110.

\bibitem[\protect\citeauthoryear{Gostar, Hoseinnezhad, and
  Bab-Hadiashar}{2016}]{gostar2016}
Gostar, A.~K.; Hoseinnezhad, R.; and Bab-Hadiashar, A.
\newblock 2016.
\newblock Multi-{B}ernoulli sensor-selection for multi-target tracking with
  unknown clutter and detection profiles.
\newblock {\em Signal Processing} 119:28 -- 42.

\bibitem[\protect\citeauthoryear{Hoang and Vo}{2014}]{hoang2014sensor}
Hoang, H.~G., and Vo, B.~T.
\newblock 2014.
\newblock Sensor management for multi-target tracking via multi-{B}ernoulli
  filtering.
\newblock {\em Automatica} 50(4):1135--1142.

\bibitem[\protect\citeauthoryear{Kays \bgroup et al\mbox.\egroup
  }{2011}]{kays2011tracking}
Kays, R.; Tilak, S.; Crofoot, M.; Fountain, T.; Obando, D.; and et~al.
\newblock 2011.
\newblock Tracking animal location and activity with an automated radio
  telemetry system in a tropical rainforest.
\newblock {\em The Computer Journal}  1931--1948.

\bibitem[\protect\citeauthoryear{Koski}{1993}]{Koski1993}
Koski, J.
\newblock 1993.
\newblock {\em Multicriterion Structural Optimization}.
\newblock Dordrecht: Springer Netherlands.
\newblock  793--809.

\bibitem[\protect\citeauthoryear{Krause \bgroup et al\mbox.\egroup
  }{2008}]{krause2008robust}
Krause, A.; McMahan, H.~B.; Guestrin, C.; and Gupta, A.
\newblock 2008.
\newblock Robust submodular observation selection.
\newblock {\em JMLR} 9(Dec):2761--2801.

\bibitem[\protect\citeauthoryear{Krause, Singh, and
  Guestrin}{2008}]{krause2008near}
Krause, A.; Singh, A.; and Guestrin, C.
\newblock 2008.
\newblock Near-optimal sensor placements in {G}aussian processes: Theory,
  efficient algorithms and empirical studies.
\newblock {\em JMLR} 9(Feb):235--284.

\bibitem[\protect\citeauthoryear{MacDermed and Isbell}{2013}]{MacDermed2013}
MacDermed, L.~C., and Isbell, C.~L.
\newblock 2013.
\newblock Point based value iteration with optimal belief compression for
  {Dec-POMDPs}.
\newblock In {\em NeurIPS 26},  100--108.

\bibitem[\protect\citeauthoryear{Mahler}{2004}]{mahler2004multitarget}
Mahler, R.
\newblock 2004.
\newblock Multitarget sensor management of dispersed mobile sensors.
\newblock In {\em Theory and algorithms for cooperative systems},  239--310.
\newblock World Scientific.

\bibitem[\protect\citeauthoryear{Mahler}{2007}]{mahler2007statistical}
Mahler, R.~P.
\newblock 2007.
\newblock {\em Statistical Multisource-Multitarget Inf. Fusion}.
\newblock Artech House, Inc.

\bibitem[\protect\citeauthoryear{Messias, Spaan, and Lima}{2011}]{Messias2011}
Messias, J.~V.; Spaan, M.; and Lima, P.~U.
\newblock 2011.
\newblock Efficient offline communication policies for {Factored Multiagent
  POMDPs}.
\newblock In {\em NeurIPS 24},  1917--1925.

\bibitem[\protect\citeauthoryear{Murphy \bgroup et al\mbox.\egroup
  }{2008}]{Murphy2008}
Murphy, R.~R.; Tadokoro, S.; Nardi, D.; Jacoff, A.; Fiorini, P.; Choset, H.;
  and Erkmen, A.~M.
\newblock 2008.
\newblock {\em Search and Rescue Robotics}.
\newblock Berlin, Heidelberg: Springer Berlin Heidelberg.
\newblock  1151--1173.

\bibitem[\protect\citeauthoryear{Nemhauser, Wolsey, and
  Fisher}{1978}]{nemhauser1978analysis}
Nemhauser, G.~L.; Wolsey, L.~A.; and Fisher, M.~L.
\newblock 1978.
\newblock An analysis of approximations for maximizing submodular set
  functions—i.
\newblock {\em Mathematical programming} 14(1):265--294.

\bibitem[\protect\citeauthoryear{Nguyen \bgroup et al\mbox.\egroup
  }{2019a}]{hoa2019jofr}
Nguyen, H.~V.; Chesser, M.; Koh, L.~P.; Rezatofighi, H.; and Ranasinghe, D.~C.
\newblock 2019a.
\newblock {TrackerBots}: Autonomous unmanned aerial vehicle for real-time
  localization and tracking of multiple radio-tagged animals.
\newblock {\em Journal of Field Robotics} 36(3):617--635.

\bibitem[\protect\citeauthoryear{Nguyen \bgroup et al\mbox.\egroup
  }{2019b}]{hoa2019tsp}
Nguyen, H.~V.; Rezatofighi, H.; Vo, B.-N.; and Ranasinghe, D.~C.
\newblock 2019b.
\newblock Online {UAV} path planning for joint detection and tracking of
  multiple radio-tagged objects.
\newblock {\em IEEE Trans. on Sig. Proc.} 67(20):5365--5379.

\bibitem[\protect\citeauthoryear{Oliehoek \bgroup et al\mbox.\egroup
  }{2008}]{oliehoek2008exploiting}
Oliehoek, F.~A.; Spaan, M.~T.; Whiteson, S.; and Vlassis, N.
\newblock 2008.
\newblock Exploiting locality of interaction in factored dec-pomdps.
\newblock In {\em AAMAS},  517--524.

\bibitem[\protect\citeauthoryear{Reid}{1979}]{reid1979algorithm}
Reid, D.
\newblock 1979.
\newblock An algorithm for tracking multiple targets.
\newblock {\em IEEE Trans, on Auto. Control} 24(6):843--854.

\bibitem[\protect\citeauthoryear{Ristic and Vo}{2010}]{ristic2010sensor}
Ristic, B., and Vo, B.-N.
\newblock 2010.
\newblock Sensor control for multi-object state-space estimation using random
  finite sets.
\newblock {\em Automatica} 46(11):1812--1818.

\bibitem[\protect\citeauthoryear{Satsangi \bgroup et al\mbox.\egroup
  }{2018}]{satsangi2018exploiting}
Satsangi, Y.; Whiteson, S.; Oliehoek, F.~A.; and Spaan, M.~T.
\newblock 2018.
\newblock Exploiting submodular value functions for scaling up active
  perception.
\newblock {\em Autonomous Robots} 42(2):209--233.

\bibitem[\protect\citeauthoryear{Schuhmacher, Vo, and
  Vo}{2008}]{schuhmacher2008consistent}
Schuhmacher, D.; Vo, B.-T.; and Vo, B.-N.
\newblock 2008.
\newblock A consistent metric for performance evaluation of multi-object
  filters.
\newblock {\em IEEE Trans. on Sig. Proc.} 56(8):3447--3457.

\bibitem[\protect\citeauthoryear{Silver and Veness}{2010}]{David2010}
Silver, D., and Veness, J.
\newblock 2010.
\newblock Monte-carlo planning in large {POMDP}s.
\newblock In {\em NeurIPS 23},  2164--2172.

\bibitem[\protect\citeauthoryear{Spaan, Veiga, and
  Lima}{2015}]{spaan2015decision}
Spaan, M.~T.; Veiga, T.~S.; and Lima, P.~U.
\newblock 2015.
\newblock Decision-theoretic planning under uncertainty with information
  rewards for active cooperative perception.
\newblock {\em AAMAS}  1157--1185.

\bibitem[\protect\citeauthoryear{Thomas, Holland, and
  Minot}{2012}]{thomas2012wildlife}
Thomas, B.; Holland, J.~D.; and Minot, E.~O.
\newblock 2012.
\newblock Wildlife tracking technology options and cost considerations.
\newblock {\em Wildlife Research} 38(8):653--663.

\bibitem[\protect\citeauthoryear{Thrun, Burgard, and
  Fox}{2005}]{thrun2005probabilistic}
Thrun, S.; Burgard, W.; and Fox, D.
\newblock 2005.
\newblock {\em Probabilistic robotics}.
\newblock MIT press.

\bibitem[\protect\citeauthoryear{Vo \bgroup et al\mbox.\egroup
  }{2012}]{vo2012multi}
Vo, B.~T.; See, C.~M.; Ma, N.; and Ng, W.~T.
\newblock 2012.
\newblock Multi-sensor joint detection and tracking with the {B}ernoulli
  filter.
\newblock {\em IEEE Trans. on Aero. and Elect. Syst.} 48(2):1385--1402.

\bibitem[\protect\citeauthoryear{Wai \bgroup et al\mbox.\egroup
  }{2018}]{wai2018multi}
Wai, H.-T.; Yang, Z.; Wang, P.~Z.; and Hong, M.
\newblock 2018.
\newblock Multi-agent reinforcement learning via double averaging primal-dual
  optimization.
\newblock In {\em NeurIPS 32},  9649--9660.

\bibitem[\protect\citeauthoryear{Wang \bgroup et al\mbox.\egroup
  }{2018}]{Reza2}
Wang, X.; Hoseinnezhad, R.; Gostar, A.~K.; Rathnayake, T.; Xu, B.; and
  Bab-Hadiashar, A.
\newblock 2018.
\newblock Multi-sensor control for multi-object {B}ayes filters.
\newblock {\em Signal Processing} 142:260--270.

\bibitem[\protect\citeauthoryear{Whiteson and Roijers}{2016}]{Whiteson2016}
Whiteson, S., and Roijers, D.~M.
\newblock 2016.
\newblock Tutorial: Multi-objective planning under uncertainty.
\newblock {\em ICAPS}.

\bibitem[\protect\citeauthoryear{Zhu, Wang, and Liang}{2019}]{zhu2019multi}
Zhu, Y.; Wang, J.; and Liang, S.
\newblock 2019.
\newblock Multi-objective optimization based multi-{B}ernoulli sensor selection
  for multi-target tracking.
\newblock {\em Sensors} 19(4):980.

\end{thebibliography}
\end{document}